\definecolor{darkred}  {rgb}{0.5,0,0}
\definecolor{darkblue} {rgb}{0,0,0.5}
\definecolor{darkgreen}{rgb}{0,0.5,0}
\protected\def\tikz@nonactivecolon{\ifmmode\mathrel{\mathop\ordinarycolon}\else:\fi}
\crefname{lemma}{Lemma}{Lemmas}
\crefname{proposition}{Proposition}{Propositions}
\crefname{definition}{Definition}{Definitions}
\crefname{theorem}{Theorem}{Theorems}
\crefname{conjecture}{Conjecture}{Conjectures}
\crefname{corollary}{Corollary}{Corollaries}
\crefname{section}{Section}{Sections}
\crefname{appendix}{Appendix}{Appendices}
\crefname{figure}{Fig.}{Figs.}
\crefname{table}{Table}{Tables}
\definecolor{gray}{rgb}{0.95, 0.95, 0.95}
\newcommand*\graybox[1]{\colorbox{gray}{\hspace{2pt}#1\hspace{2pt}}}
\newenvironment{falign}{\empheq[box=\graybox]{align}}{\endempheq}
\newcommand{\ket}[1]{|#1\rangle}
\newcommand{\bra}[1]{\langle#1|}
\newcommand{\braket}[2]{\langle#1|#2\rangle}
\newcommand{\proj}[1]{|#1\rangle\langle#1|}
\newcommand{\x}{\otimes}
\newcommand{\ct}{^{\dagger}}
\newcommand{\tp}{^{\mathsf{T}}}
\DeclarePairedDelimiter{\set}{\lbrace}{\rbrace}
\DeclarePairedDelimiter{\abs}{\lvert}{\rvert}
\DeclarePairedDelimiter{\of}{\lparen}{\rparen}
\DeclarePairedDelimiter{\sof}{\lbrack}{\rbrack}
\DeclarePairedDelimiter{\ip}{\langle}{\rangle}
\renewcommand{\Re}{\operatorname{Re}}
\renewcommand{\Im}{\operatorname{Im}}
\DeclareMathOperator{\spn}{span}
\DeclareMathOperator{\Tr}{Tr}
\newcommand{\mx}[1]{\begin{pmatrix}#1\end{pmatrix}}
\newcommand{\smx}[1]{\bigl(\begin{smallmatrix}#1\end{smallmatrix}\bigr)}
\newcommand{\C}{\mathbb{C}}
\newcommand{\R}{\mathbb{R}}
\newcommand{\Z}{\mathbb{Z}}
\newcommand{\mc}[1]{\mathcal{#1}}
\newcommand{\D}[1]{\mathcal{D}(#1)} 
\newcommand{\U}[1]{\mathrm{U}(#1)}
\newcommand{\SU}[1]{\mathrm{SU}(#1)}
\renewcommand{\S}{\mathrm{S}}
\newcommand{\e}{e}
\newcommand{\rep}{\tau}
\newcommand{\z}{z}
\newcommand{\bs}{\square}
\newcommand{\bp}{\boxplus}
\newcommand{\bm}{\boxminus}
\newtheorem{theorem}{Theorem}
\newtheorem{lemma}[theorem]{Lemma}
\newtheorem{proposition}[theorem]{Proposition}
\newtheorem{corollary}[theorem]{Corollary}
\newtheorem*{conjecture*}{Conjecture}
\theoremstyle{definition}
\newtheorem*{remark}{Remark}
\newtheorem*{example}{Example}
\newcommand{\Gr}[2][-0.1cm]{
\begin{tikzpicture}[semithick, baseline = #1,
  box/.style = {draw, fill = white, inner sep = 0, minimum height = 9, minimum width = 9},
  tri/.style = {box, inner xsep = 1.5, isosceles triangle}]
  \newcommand{\X}{0.50}
  \newcommand{\Y}{0.45}
  #2
\end{tikzpicture}}
\newcommand{\gr}[1]{\Gr{\csuse{#1}}}
\newcommand{\Place}[3]{
\begin{scope}[xshift = #2cm, yshift = #3cm]
  #1
\end{scope}}
\newcommand{\place}[3]{\Place{\csuse{#1}}{#2}{#3}}
\newcommand{\round}{.. controls +(0.3,0) and +(-0.3,0) ..}
\newcommand{\db}{0.12} 
\newcommand{\loopr}{0.15} 
\newcommand{\loopR}{0.35} 
\newcommand{\trloop}[4]{
  \coordinate (I3) at (#1*\X,#3*\Y);  \coordinate (L3) at ($(I3)+(0,-0.4*\Y-\db)$);
  \coordinate (O3) at (#2*\X,#3*\Y);  \coordinate (R3) at ($(O3)+(0,-0.4*\Y-\db)$);
  \draw (I3) ..controls +(-\loopr,0) and +(-\loopr,0) .. (L3)
     -- (R3) ..controls +( \loopr,0) and +( \loopr,0) .. (O3)#4; 
}
\newcommand{\Trloop}[3]{
  \coordinate (I2) at (#1*\X,#3*\Y);  \coordinate (L2) at ($(I2)+(0,-1.4*\Y-2*\db)$);
  \coordinate (O2) at (#2*\X,#3*\Y);  \coordinate (R2) at ($(O2)+(0,-1.4*\Y-2*\db)$);
  \draw (I2) ..controls +(-\loopR,0) and +(-\loopR,0) .. (L2)
     -- (R2) ..controls +( \loopR,0) and +( \loopR,0) .. (O2);
}
\newcommand{\Cl}[2]{
  \place{S#1}{0*\X}{1*\Y}
  \place{Bxs}{0*\X}{1*\Y}
  \place{T#2}{2*\X}{1*\Y}
  \draw (-0.4*\X,2*\Y) -- (     0,2*\Y);
  \draw (   3*\X,2*\Y) -- (3.4*\X,2*\Y);
  \trloop{0}{3}{0}{}
  \Trloop{0}{3}{1}
}
\newcommand{\IOO}[3]{
  \draw (-0.2*\X,0) -- (3.2*\X,0);
  \trloop{0}{1}{-1}{--cycle}
  \trloop{2}{3}{-1}{--cycle}
  \node[box] at (1.5*\X,  0) {\scriptsize #1};
  \node[box] at (0.5*\X,-\Y) {\scriptsize #2};
  \node[box] at (2.5*\X,-\Y) {\scriptsize #3};
}
\newcommand{\IIO}[3]{
  \draw (-0.2*\X,0) -- (3.2*\X,0);
  \trloop{1}{2}{-1}{--cycle}
  \node[box] at (  1*\X,  0) {\scriptsize #1};
  \node[box] at (  2*\X,  0) {\scriptsize #2};
  \node[box] at (1.5*\X,-\Y) {\scriptsize #3};
}
\newcommand{\OOO}[3]{
  \draw (-0.2*\X,0) -- (3.2*\X,0);
  \trloop{0.5}{2.5}{-1}{--cycle}
  \node[box] at (1.5*\X,  0) {\scriptsize #1};
  \node[box] at (  1*\X,-\Y) {\scriptsize #2};
  \node[box] at (  2*\X,-\Y) {\scriptsize #3};
}
\newcommand{\III}[3]{
  \draw (-0.2*\X,-0.5*\Y) -- (3.2*\X,-0.5*\Y);
  \node[box] at (0.5*\X,-0.5*\Y) {\scriptsize #1};
  \node[box] at (1.5*\X,-0.5*\Y) {\scriptsize #2};
  \node[box] at (2.5*\X,-0.5*\Y) {\scriptsize #3};
}
\newcommand{\rod}[4]{
  \tikzmath{
    \dx = #3-#1;
    \dy = #4-#2;
    \L = sqrt((\dx)^2+(\dy)^2);
    coordinate \vx, \vy;
    \w = 0.08;
    \vx = \w*( \dx/\L, \dy/\L);
    \vy = \w*(-\dy/\L, \dx/\L);
    coordinate \A, \B;
    \A = (#1,#2);
    \B = (#3,#4);
    coordinate \W;
    \W1 = (\A) - (\vx) - (\vy);
    \W2 = (\A) - (\vx) + (\vy);
    \W3 = (\B) + (\vx) + (\vy);
    \W4 = (\B) + (\vx) - (\vy);
  }
  \draw[fill = white, rounded corners = 3pt] (\W1) -- (\W2) -- (\W3) -- (\W4) -- cycle;
  \begin{scope}[pt/.style = {circle, fill = black, inner sep = 1.0pt}]
    \node[pt] at (\A) {};
    \node[pt] at (\B) {};
  \end{scope}
}
\newcommand{\linkage}{
\begin{tikzpicture}[> = latex, scale = 1.2]

  \def\z{3}

  \def\xx{\z*1.3}
  \def\yy{\z*0.7}

  \node at (\xx+0.3,0) {$\Re$};
  \node at (0,\yy+0.3) {$\Im$};
  \draw[->] (-0.2*\z,0) -- (\xx,0);
  \draw[->] (0,-0.2*\z) -- (0,\yy);
  \node at (-0.2,-0.3) {0};
  \node at (  \z,-0.3) {1};

  \def\ax{\z*0.126156}
  \def\ay{\z*0.388267}
  \def\bx{\z*0.655937}
  \def\by{\z*0.617755}

  \rod{0}{0}{\ax}{\ay}
  \rod{\bx}{\by}{\z}{0}
  \rod{\ax}{\ay}{\bx}{\by}

  \path (\ax/2,\ay/2) + (-0.30,0.15) node {$q_1$};
  \path (\ax/2+\bx/2,\ay/2+\by/2) + (-0.05,0.25) node {$q_2$};
  \path (\bx/2+\z/2,\by/2) + (0.2,0.15) node {$q_3$};

\end{tikzpicture}
}
\newcommand{\clock}{
\begin{tikzpicture}[scale = 1.2]
  \foreach \f / \t / \ax / \ay / \bx / \by in
    {3 \bm (1 \bp 2) /  0 / 0.1691020 /  0.9855986 / 1.1547005 /  0.8164966,
     1 \bp (2 \bp 3) /  1 / 0.5773503 /  0.8164966 / 1.5629488 /  0.9855986,
     2 \bp (3 \bm 1) /  2 / 0.9855986 /  0.1691020 / 1.5629488 /  0.9855986,
     3 \bm (1 \bm 2) /  3 / 0.9855986 / -0.1691020 / 1.1547005 /  0.8164966,
     1 \bm (2 \bp 3) /  4 / 0.5773503 / -0.8164966 / 0.7464522 /  0.1691020,
     2 \bp (3 \bp 1) /  5 / 0.1691020 / -0.9855986 / 0.7464522 / -0.1691020,
     3 \bp (1 \bm 2) /  6 / 0.1691020 / -0.9855986 / 1.1547005 / -0.8164966,
     1 \bm (2 \bm 3) /  7 / 0.5773503 / -0.8164966 / 1.5629488 / -0.9855986,
     2 \bm (3 \bp 1) /  8 / 0.9855986 / -0.1691020 / 1.5629488 / -0.9855986,
     3 \bp (1 \bp 2) /  9 / 0.9855986 /  0.1691020 / 1.1547005 / -0.8164966,
     1 \bp (2 \bm 3) / 10 / 0.5773503 /  0.8164966 / 0.7464522 / -0.1691020,
     2 \bm (3 \bm 1) / 11 / 0.1691020 /  0.9855986 / 0.7464522 /  0.1691020 }
    {
      \tikzmath{
        \xs = 6*cos(180+60+15+360*\t/12);
        \ys = 6*sin(180+60+15+360*\t/12);
      }
      \def\mx{1.7320508}
      \node at (\mx/2 + 0.6*\xs,0.6*\ys) {$\f$};
      \begin{scope}[xshift = \xs cm, yshift = \ys cm]
        \draw (0,0) -- (\mx,0);
        \rod{0}{0}{\ax}{\ay}
        \rod{\bx}{\by}{\mx}{0}
        \rod{\ax}{\ay}{\bx}{\by}
      \end{scope}
    }
\end{tikzpicture}
}
\begin{document}

\title{How to combine three quantum states}
\date{February 27, 2017}
\author{Maris Ozols}
\affiliation{Department of Applied Mathematics and Theoretical Physics,
University of Cambridge, Cambridge, CB3 0WA, U.K.}
\email{marozols@gmail.com}
\homepage{http://marozols.com}

\maketitle

\begin{abstract}
We devise a ternary operation for combining three quantum states: it consists of permuting the input systems in a continuous fashion and then discarding all but one of them.  This generalizes a binary operation recently studied by Audenaert et al.~\cite{ADO16}
in the context of entropy power inequalities.  Our ternary operation continuously interpolates between all such nested binary operations.  Our construction is based on a unitary version of Cayley's theorem: using representation theory we show that any finite group can be naturally embedded into a continuous subgroup of the unitary group.  Formally, this amounts to characterizing when a linear combination of certain permutations is unitary.
\end{abstract}


\section{Introduction}

A basic result in group theory known as \emph{Cayley's theorem} states that every finite group $G$ is isomorphic to a subgroup of the symmetric group.  In other words, the elements of $G$ can be faithfully represented by permutation matrices of size $\abs{G} \times \abs{G}$.  This gives a natural embedding of $G$ in the symmetric group $\S_n$ on $n = \abs{G}$ elements, known as the \emph{regular representation} of $G$.

Since permutation matrices are unitary, one might ask whether the resulting subgroup of $\S_{\abs{G}}$ can be further extended to a \emph{continuous} subgroup of the unitary group $\U{\abs{G}}$.  Such extension would allow to treat the otherwise discrete group $G$ as continuous (e.g., it would allow to perturb the elements of $G$ and continuously interpolate between them in a meaningful sense).

To illustrate this, consider the simplest non-trivial example, namely, the finite group $\Z_2$.  This group has two elements which, according to Cayley's theorem, can be represented by $2 \times 2$ permutation matrices as $I := \smx{1&0\\0&1}$ and $X := \smx{0&1\\1&0}$.  Interestingly, the complex linear combination
\begin{equation}
  U(\varphi, \alpha)
 := e^{i \varphi} \of[\big]{ \cos \alpha \, I + i \sin \alpha \, X }
  = e^{i \varphi} \mx{\cos \alpha & i \sin \alpha \\ i \sin \alpha & \cos \alpha}
  \label{eq:Ufa}
\end{equation}
of $I$ and $X$ is unitary for any $\varphi,\alpha \in [0, 2\pi)$.  Note that $U(0,0) = I$ and $U(3\pi/2,\pi/2) = X$, so by changing $\varphi$ and $\alpha$ we can continuously interpolate between the two original matrices $I$ and $X$ that represent $\Z_2$.  In fact, $U(\varphi, \alpha) U(\varphi', \alpha') = U(\varphi + \varphi', \alpha + \alpha')$, so the matrices $U(\varphi, \alpha)$ themselves form a group---a continuous two-parameter subgroup of $\U{2}$.

\subsection{Summary and main contributions}

One of our main mathematical contributions is a generalization of the above idea to \emph{any} finite group $G$: using representation theory, we characterize when a complex linear combination of matrices from the regular representation of $G$ is unitary (see \cref{thm:Unitary L}).  Equivalently, this characterizes the intersection of $\U{\abs{G}}$ and $\C[G]$, where $\C[G]$---also known as the \emph{group algebra} of $G$---consists of all formal complex linear combinations of elements of $G$.  As a consequence, any finite group $G$ can be naturally embedded into a continuous subgroup of the unitary group $\U{\abs{G}}$.  This subgroup is isomorphic to a direct sum of smaller unitary groups---one copy of $\U{d_{\rep}}$ for each irreducible representation $\rep$ of $G$, where $d_\rep$ is the dimension of $\rep$ (see \cref{cor:Cayley}).

We apply this result in quantum information theory by introducing a ternary operation that combines three quantum states.  Our operation generalizes a similar binary operation studied recently in the context of entropy power inequalities~\cite{ADO16} and quantum algorithms~\cite{LMR14,LMR+}.  This binary operation can be regarded as a qudit analogue of a beam splitter with two input and two output ports (with one of the outputs immediately discarded).  Similarly, our ternary operation is an analogue of a three-input and three-output beam splitter (with two of the outputs discarded).

More formally, the general operation consists of applying a joint unitary transformation $U$ on all three input states, followed by discarding all output systems except the first.  We require the unitary $U$ to be a complex linear combination of three-qudit permutations, thus guaranteeing that the overall operation is \emph{covariant}, i.e., commutes with any local basis change that is applied simultaneously to all three systems.  To characterize such unitaries $U$, we use the above group-theoretic result with $G = \S_3$ where $\S_3$ denotes the \emph{symmetric group} on three elements (see \cref{cor:Unitary}).

By employing graphical tensor network notation, we derive a general expression for the output state in \cref{sect:Contraction}, and then explicitly parametrize it in \cref{sect:General3} using the irreducible representations of $\S_3$.  In \cref{sect:Parametrizations}, we derive alternative parametrizations that provide further insight into our operation.  In particular, in \cref{sect:Linkage} we relate the parameters to those of a \emph{four-bar linkage} mechanism.

Using these insights, we show how nested binary operations (i.e., ones that combine the input states, two at a time) are special cases of our ternary operation (see \cref{lem:Nested}).  We also characterize the number of output state orbits when the output weights are fixed (there can be either $1$ or $2$ continuous one-parameter orbits, see \cref{prop:Orbits}).  Finally, in \cref{sect:Uniform} we investigate the \emph{uniform combination} where all states appear with equal weights in the output.

\subsection{The partial swap operation}

Ability to make discrete groups continuous has an interesting application to quantum information.  If we apply our construction to the symmetric group $\S_n$, we get a unitary extension of $\S_n$ that continuously interpolates between different permutations.  This then can be further extended to permuting $n$ quantum systems in a continuous fashion.

To illustrate this, let us focus again on the simplest non-trivial instance, $\S_2$.  This group can act on two $d$-dimensional quantum systems (\emph{qudits}) either by leaving them alone or by swapping them.  In other words, we can represent the elements of $\S_2$ by $d^2 \times d^2$ permutation matrices $I$ and $S$, where $I$ is the identity matrix and $S \ket{i}\ket{j} := \ket{j}\ket{i}$, for all $i,j \in \set{1, \dotsc, d}$, is the \emph{swap} operation.  Making this representation continuous would allow to interpolate between these two operations and thus swap two quantum systems in a continuous fashion.  This idea was exactly the starting point of~\cite{ADO16} (see also \cite{LMR14,LMR+} where it has been explored in the context of quantum algorithms).

Following \cite{ADO16}, we define the \emph{partial swap} operation $U_\lambda \in \U{d^2}$ for $\lambda \in [0,1]$ as the following linear combination of the identity $I$ and the two-qudit swap $S$:
\begin{equation}
  U_\lambda := \sqrt{\lambda} \, I + i \, \sqrt{1-\lambda} \, S
  \label{eq:Ulambda}
\end{equation}
This can easily be verified to be unitary.  In fact, $U_\lambda$ is very similar to $U(\varphi,\alpha)$ in \cref{eq:Ufa}, except we ignore the global phase $e^{i \varphi}$ and take only one sector of the unit circle corresponding to $\alpha \in [0, \pi/2]$.  Note also that $U_0 = iS$ rather than $S$; however, this global phase mismatch will be unimportant.

Let $\D{d}$ denote the set of all $d \times d$ \emph{density matrices} or \emph{qudit states}.  Given two states $\rho_1, \rho_2 \in \D{d}$, we can \emph{combine} them using the partial swap as follows:
\begin{equation}
  \rho_1 \bp_\lambda \rho_2
  := \Tr_2 \sof[\Big]{U_\lambda (\rho_1 \x \rho_2) U_\lambda\ct}
  \label{eq:box2}
\end{equation}
where $\lambda \in [0,1]$ and $\Tr_2$ denotes the \emph{partial trace} over the second system.  With some tricks (see \cref{sect:Contraction}), \cref{eq:box2} can be expanded as
\begin{equation}
  \rho_1 \bp_\lambda \rho_2
   := \lambda \rho_1 + (1-\lambda) \rho_2 + \sqrt{\lambda(1-\lambda)} \, i[\rho_2,\rho_1]
  \label{eq:box}
\end{equation}
where $[A,B] := AB-BA$ denotes the \emph{commutator} of $A$ and $B$.  Note that $\rho_1 \bp_0 \rho_2 = \rho_2$ and $\rho_1 \bp_1 \rho_2 = \rho_1$.  Moreover, if the two states commute, i.e., $[\rho_2,\rho_1] = 0$, then $\rho_1 \bp_\lambda \rho_2 = \lambda \rho_1 + (1-\lambda) \rho_2$ is just a convex combination of the two states.

The operation for combining two states obeys some interesting properties.  For example, it is \emph{covariant} under any unitary change of basis.  That is, for any $V \in \U{d}$,
\begin{equation}
  (V \rho_1 V\ct) \bp_\lambda (V \rho_2 V\ct)
  = V (\rho_1 \bp_\lambda \rho_2) V\ct.
  \label{eq:Covariance}
\end{equation}
Another, less obvious property is a quantum analogue of the \emph{entropy power inequality}~\cite{ADO16}.  To state it in full generality, consider a function $f : \D{d} \to \R$.  We say that $f$ is \emph{concave} if
\begin{equation}
  f\of[\big]{\lambda \rho + (1-\lambda) \sigma}
  \geq \lambda f(\rho) + (1-\lambda) f(\sigma)
\end{equation}
for any $\lambda \in [0,1]$ and $\rho, \sigma \in \D{d}$, and \emph{symmetric} if $f(\rho)$ depends only on the eigenvalues of $\rho$ and is symmetric in them (a simple example of a function $f$ satisfying these properties is the \emph{von~Neumann entropy}).  The following result was originally obtained in~\cite{ADO16} (see also~\cite{CLL16} for a very elegant alternative proof).

\begin{theorem}[\cite{ADO16}]\label{thm:EPI}
Let $d \geq 2$, $\rho, \sigma \in \D{d}$, $\lambda \in [0,1]$, and $\rho \boxplus_\lambda \sigma$ be as in \cref{eq:box}. If $f$ is concave and symmetric then
\begin{equation}
  f(\rho \boxplus_\lambda \sigma) \geq \lambda f(\rho) + (1-\lambda) f(\sigma).
  \label{eq:EPI}
\end{equation}
\end{theorem}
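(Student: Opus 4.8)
The plan is to realize $\rho \boxplus_\lambda \sigma$ as a convex combination of one state majorized by $\rho$ and one majorized by $\sigma$, and then to play this off against the two defining properties of $f$. I will use two standard facts about a symmetric concave $f$ on $\D{d}$. First, $f$ is \emph{unitarily invariant}, $f(V\tau V\ct) = f(\tau)$ for all $V \in \U{d}$, since it depends only on the spectrum. Second, $f$ is \emph{Schur-concave}: if $\tau \prec \tau'$ (the spectrum of $\tau$ is majorized by that of $\tau'$) then $f(\tau) \ge f(\tau')$, which is the classical statement that a symmetric concave function of the eigenvalues is Schur-concave.

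The heart of the argument is the following decomposition lemma: there exist states $\tau_1, \tau_2 \in \D{d}$ with $\tau_1 \prec \rho$ and $\tau_2 \prec \sigma$ such that
\begin{equation}
  \rho \boxplus_\lambda \sigma = \lambda\, \tau_1 + (1-\lambda)\, \tau_2 .
\end{equation}
Granting this, the theorem follows in two lines: concavity is applied to the right-hand side, and then Schur-concavity replaces $\tau_1,\tau_2$ by $\rho,\sigma$, giving
\begin{equation}
  f(\rho \boxplus_\lambda \sigma)
  = f\of[\big]{\lambda \tau_1 + (1-\lambda)\tau_2}
  \ge \lambda f(\tau_1) + (1-\lambda) f(\tau_2)
  \ge \lambda f(\rho) + (1-\lambda) f(\sigma),
\end{equation}
which is exactly \cref{eq:EPI}. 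Note that the commutator term in \cref{eq:box} is what makes this nontrivial: it makes $\rho\boxplus_\lambda\sigma$ \emph{less} mixed than the plain convex mixture $\lambda\rho+(1-\lambda)\sigma$, so concavity alone does not suffice and the weighted lower bound must come from the decomposition.

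To build the decomposition I would work from the partial-trace form \cref{eq:box2}, $\rho \boxplus_\lambda \sigma = \Tr_2\sof[\big]{U_\lambda (\rho \x \sigma) U_\lambda\ct}$ with $U_\lambda$ as in \cref{eq:Ulambda}. The cleanest target is the stronger claim that $\tau_1 = V \rho V\ct$ and $\tau_2 = W \sigma W\ct$ for suitable $V, W \in \U{d}$, so that $\prec$ holds with equality of spectra; this amounts to designing a two-outcome instrument on the discarded system whose branches occur with probabilities $\lambda$ and $1-\lambda$ and leave the retained system in unitary copies of $\rho$ and $\sigma$. As a warm-up, for $d=2$ everything is explicit: writing $\rho,\sigma$ via Bloch vectors $\vec r,\vec s$, the output \cref{eq:box} has Bloch vector $\lambda \vec r + (1-\lambda)\vec s - \sqrt{\lambda(1-\lambda)}\,(\vec s \times \vec r)$, and a short computation shows its length always lies in $\sof[\big]{\abs{\lambda\abs{\vec r} - (1-\lambda)\abs{\vec s}},\ \lambda\abs{\vec r} + (1-\lambda)\abs{\vec s}}$, which is precisely the set of lengths attainable by $\lambda V\rho V\ct + (1-\lambda) W\sigma W\ct$; matching the length and direction then fixes $V,W$.

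The main obstacle is establishing the decomposition in arbitrary dimension $d$, where the term $\sqrt{\lambda(1-\lambda)}\,i[\sigma,\rho]$ can no longer be produced by a single Bloch rotation. A naive symmetric twirl cannot help: the commutator is first order in the generator and is killed by averaging $V$ against $V\ct$, so the instrument must instead be adapted to the joint structure of $U_\lambda$, e.g.\ to its eigenprojectors onto the symmetric and antisymmetric subspaces via $U_\lambda = e^{i\theta}\Pi_+ + e^{-i\theta}\Pi_-$ with $\cos\theta = \sqrt{\lambda}$ and $S=\Pi_+-\Pi_-$. If an exact unitary decomposition turns out to be too rigid, the weaker majorization version $\tau_1 \prec \rho$, $\tau_2 \prec \sigma$ already suffices and should be more flexible to prove—for instance by a convexity/continuity argument placing $\rho \boxplus_\lambda \sigma$ inside $\lambda\,\operatorname{conv}\set{V\rho V\ct} + (1-\lambda)\operatorname{conv}\set{W\sigma W\ct}$, using that each such convex hull is exactly the majorization order ideal below $\rho$ (resp.\ $\sigma$). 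The commuting case $[\sigma,\rho]=0$ is the trivial consistency check: there $\rho \boxplus_\lambda \sigma = \lambda\rho + (1-\lambda)\sigma$ and \cref{eq:EPI} is immediate from concavity.
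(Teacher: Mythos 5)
You should first note a quirk of the comparison: the paper contains no proof of \cref{thm:EPI} at all---it is imported verbatim from \cite{ADO16}, with \cite{CLL16} cited as an elegant alternative proof---so your attempt can only be measured against those cited arguments. Your architecture is in fact the right one, and essentially that of \cite{CLL16}: reduce \cref{eq:EPI} to a majorization statement and finish with concavity plus Schur-concavity. The finishing step is sound: a symmetric concave $f$ is unitarily invariant and Schur-concave (concavity of $f$ on commuting states makes the spectral function concave, and a symmetric concave function of the eigenvalues is Schur-concave), so your two-line deduction from the decomposition lemma is correct. Moreover, your weak lemma is true and is \emph{equivalent} to the majorization $\spec(\rho \boxplus_\lambda \sigma) \prec \lambda\,\spec^{\downarrow}(\rho) + (1-\lambda)\,\spec^{\downarrow}(\sigma)$, which is precisely the main theorem of \cite{CLL16}: one direction is Ky Fan's eigenvalue inequality, and conversely, given the majorization, Uhlmann's theorem writes $\rho \boxplus_\lambda \sigma = \sum_k p_k U_k D U_k\ct$ with $D := \diag\of[\big]{\lambda \mu^{\downarrow} + (1-\lambda)\nu^{\downarrow}}$, and splitting each conjugate $U_k D U_k\ct$ into its $\mu$- and $\nu$-parts yields your $\tau_1 \prec \rho$, $\tau_2 \prec \sigma$.

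The genuine gap is that this decomposition lemma---the \emph{only} nontrivial ingredient, carrying the entire content of the theorem---is established in your write-up only for $d = 2$, where the Bloch-vector computation (which is correct, including the attainable interval $\sof[\big]{\abs{\lambda r - (1-\lambda)s},\ \lambda r + (1-\lambda)s}$ for the output length) settles it because a qubit state is determined by the length and direction of its Bloch vector. For general $d$ you offer two routes, and neither is an argument. The instrument adapted to $U_\lambda = e^{i\theta}\Pi_+ + e^{-i\theta}\Pi_-$ cannot work as described: measuring $\set{\Pi_+, \Pi_-}$ produces branch probabilities $\frac{1}{2}\of[\big]{1 \pm \Tr(\rho\sigma)}$, which depend on the input states and not at all on $\lambda$, and the residual states are not unitary conjugates of the inputs; more broadly, you give no mechanism forcing the strong form $\tau_1 = V\rho V\ct$, $\tau_2 = W\sigma W\ct$ in dimension $d \geq 3$, and it is not known to hold there. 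The fallback ``convexity/continuity argument'' is circular: by Uhlmann's theorem (as you yourself note), membership of $\rho \boxplus_\lambda \sigma$ in $\lambda \operatorname{conv}\set{V\rho V\ct} + (1-\lambda)\operatorname{conv}\set{W\sigma W\ct}$ \emph{is} the weak lemma, so invoking it as a proof strategy restates the goal. In short: you have correctly identified and isolated the hard core of the theorem, verified it for qubits, and supplied the (routine) reduction from that core to \cref{eq:EPI}, but the core itself---the majorization proved in \cite{CLL16}---remains unproven in your proposal, so it is an outline rather than a proof.
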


Motivated by this result, our goal is to generalize the partial swap operation to \emph{any} number of systems.  The rest of this paper is devoted to obtaining such generalization and using it to derive an analogue of \cref{eq:box} for three states.  Proving an analogue of \cref{thm:EPI} for this generalization is left as an open problem.

\subsection{The most general way of combining two states} \label{sect:General U}

Before we attempt to generalize \cref{eq:box}, it is worthwhile first checking whether $U_\lambda$ in \cref{eq:Ulambda} is actually the most general unitary matrix that can be expressed as a complex linear combination of $I$ and $S$.  As it turns out, up to an overall global phase and the sign of $i$, this is indeed the case.

\begin{proposition}\label{prop:General U}
Let $\z_1, \z_2 \in \C$ and $U := \z_1 I + \z_2 S$ where $S$ swaps two qudits ($d \geq 2$).  Then $U$ is unitary if and only if
$\z_1 =       e^{i \varphi} \sqrt{\lambda}$ and
$\z_2 = \pm i e^{i \varphi} \sqrt{1-\lambda}$
for some $\lambda \in [0,1]$ and $\varphi \in [0,2\pi)$.
In other words,
\begin{equation}
  U = e^{i \varphi} \of[\big]{ \sqrt{\lambda} \, I \pm i \, \sqrt{1-\lambda} \, S}.
  \label{eq:U2}
\end{equation}
\end{proposition}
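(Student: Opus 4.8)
The plan is to reduce unitarity to a pair of real scalar equations by computing $U U\ct$ directly, exploiting the fact that the swap $S$ is a Hermitian involution. First I would record the two algebraic identities $S\ct = S$ and $S^2 = I$, which give
\begin{equation}
  U U\ct
  = \of[\big]{\z_1 I + \z_2 S}\of[\big]{\bar\z_1 I + \bar\z_2 S}
  = \of[\big]{\abs{\z_1}^2 + \abs{\z_2}^2} I + 2\Re\of[\big]{\z_1 \bar\z_2} S.
\end{equation}
Since $d \geq 2$, the operators $I$ and $S$ are linearly independent: the swap has eigenvalue $+1$ on the symmetric subspace and $-1$ on the (nontrivial) antisymmetric subspace, so it is not a scalar multiple of the identity. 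Matching coefficients of $I$ and $S$ then shows that $U U\ct = I$ holds if and only if
\begin{equation}
  \abs{\z_1}^2 + \abs{\z_2}^2 = 1
  \qquad\text{and}\qquad
  \Re\of[\big]{\z_1 \bar\z_2} = 0.
\end{equation}
As $U$ is a square matrix, $U U\ct = I$ is equivalent to $U$ being unitary, so these two conditions are exactly what I need to characterize.

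Second, I would solve this system explicitly. Writing $\z_1 = r_1 e^{i\theta_1}$ and $\z_2 = r_2 e^{i\theta_2}$ with $r_1, r_2 \geq 0$, the first condition becomes $r_1^2 + r_2^2 = 1$, which I parametrize as $r_1 = \sqrt{\lambda}$ and $r_2 = \sqrt{1-\lambda}$ for some $\lambda \in [0,1]$. The second condition reads $\Re(\z_1 \bar\z_2) = r_1 r_2 \cos(\theta_1 - \theta_2) = 0$, which (when $r_1 r_2 \neq 0$) forces $\theta_1 - \theta_2 \equiv \pm\pi/2 \pmod{2\pi}$, equivalently $e^{i\theta_2} = \pm i\, e^{i\theta_1}$. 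Setting $\varphi := \theta_1$ yields $\z_1 = e^{i\varphi}\sqrt{\lambda}$ and $\z_2 = \pm i\, e^{i\varphi}\sqrt{1-\lambda}$, which is the asserted form \cref{eq:U2}. For the converse I would substitute this form back in: one gets $\abs{\z_1}^2 + \abs{\z_2}^2 = \lambda + (1-\lambda) = 1$ and $\z_1 \bar\z_2 = \mp i\,\sqrt{\lambda(1-\lambda)}$, which is purely imaginary, so both conditions hold and $U$ is unitary.

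The computation is elementary, so the only points requiring care are structural rather than computational. The first is the use of $d \geq 2$, which is precisely what guarantees the antisymmetric subspace is nontrivial and hence that $I$ and $S$ are linearly independent; without it the matching of coefficients would fail (for $d = 1$ one has $S = I$ and the stated characterization is false). The second is the treatment of the degenerate endpoints $\lambda \in \set{0,1}$: when $r_1 r_2 = 0$ the phase of the vanishing coefficient is unconstrained, and I would check directly that the proposed parametrization still captures every solution, with $\varphi$ absorbing the surviving phase, so that no solutions are lost or spuriously introduced.
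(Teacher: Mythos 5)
Your proposal is correct and follows essentially the same route as the paper: compute $U U\ct$, use linear independence of $I$ and $S$ (for $d \geq 2$) to match coefficients, and solve the resulting pair of conditions $\abs{\z_1}^2 + \abs{\z_2}^2 = 1$ and $\Re(\z_1 \bar{\z}_2) = 0$ in polar form (the paper phrases the second condition as $\z_1\bar{\z}_2$ being purely imaginary, which is the same thing). Your explicit handling of the degenerate endpoints $\lambda \in \set{0,1}$ is in fact slightly more careful than the paper's proof, which passes over that case silently.
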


\begin{proof}
If $U := \z_1 I + \z_2 S$ for some $\z_1, \z_2 \in \C$ then
\begin{equation}
  U U\ct = (\z_1 \bar{\z}_1 + \z_2 \bar{\z}_2) I
         + (\z_1 \bar{\z}_2 + \bar{\z}_1 \z_2) S.
\end{equation}
Since $I$ and $S$ are linearly independent and we want $U U\ct = I$, we equate the two coefficients to $1$ and $0$, respectively.  This gives us
\begin{align}
  \abs{\z_1}^2 + \abs{\z_2}^2 = 1 \qquad \text{and} \qquad
  \z_1 \bar{\z}_2 = - \overline{\z_1 \bar{\z}_2}.
  \label{eq:Two constraints}
\end{align}
From the first equation, $\z_1 = e^{i \varphi_1} \sqrt{\lambda}$ and $\z_2 = e^{i \varphi_2} \sqrt{1-\lambda}$ for some $\lambda \in [0,1]$ and $\varphi_1, \varphi_2 \in [0,2\pi)$.  The second equation says that $\z_1 \bar{\z}_2$ is purely imaginary.  Thus $e^{i (\varphi_1 - \varphi_2)} = \pm i$, which is equivalent to $e^{i \varphi_2} = \pm i e^{i \varphi_1}$.  In other words, we can take $e^{i \varphi_1} := e^{i \varphi}$ and $e^{i \varphi_2} := \pm i e^{i \varphi}$ for some $\varphi \in [0,2\pi)$.  The reverse direction follows trivially by plugging in the values of $z_1$ and $z_2$ in \cref{eq:Two constraints}.
\end{proof}

Since the map in \cref{eq:box2} involves conjugation by $U$, the global phase $e^{i \varphi}$ in \cref{eq:U2} is irrelevant and we only have the freedom of choosing the sign in front of $i$.  Up to the sign, \cref{eq:Ulambda} indeed provides the most general unitary for our purpose.  To account for the two possible signs, we define
\begin{align}
  \rho_1 \bp_\lambda \rho_2
  &:= \lambda \rho_1 + (1-\lambda) \rho_2 + \sqrt{\lambda(1-\lambda)} \, i[\rho_2,\rho_1], \label{eq:+}\\
  \rho_1 \bm_\lambda \rho_2
  &:= \lambda \rho_1 + (1-\lambda) \rho_2 - \sqrt{\lambda(1-\lambda)} \, i[\rho_2,\rho_1]. \label{eq:-}
\end{align}
These two operations are related as follows:
\begin{equation}
  \rho_1 \bp_\lambda \rho_2 = \rho_2 \bm_{1-\lambda} \rho_1.
  \label{eq:pm}
\end{equation}
Alternatively, one can introduce a parameter $t \in [0,\pi)$ such that $\abs{\cos t} = \sqrt{\lambda}$ and $\abs{\sin t} = \sqrt{1-\lambda}$.  Then the two branches given by \cref{eq:+,eq:-} can be combined into a single expression:
\begin{equation}
  \of{\cos t}^2 \rho_1 + \of{\sin t}^2 \rho_2 + \cos t \sin t \, i[\rho_2,\rho_1].
\end{equation}

\section{Generalization to three states}

Now that we fully understand the case of two states, let us investigate how to combine three states.  While it is not immediately obvious how to proceed, one simple idea is to combine them in an iterative manner by \emph{nesting} the original operation.  This turns out to preserve some of the nice properties of the original operation.  But it is not fully satisfying, since different nesting orders generally lead to different outputs.  This ambiguity motivates the introduction of a more general \emph{ternary} operation that combines all three states at once in a more symmetric fashion.  However, before we do this, it is still worthwhile to work out nesting in more detail---this will highlight some of the issues that will appear later and will also serve as a useful reference to come back to once we have the general operation.

\subsection{A nested generalization} \label{sect:Nested}

A straightforward way of generalizing \cref{eq:box} to more systems is by feeding the output state into another operation of the same kind.  For example, take $a,a' \in [0,1]$ and consider
\begin{align}
  \rho_1 \bp_a \of{\rho_2 \bp_{a'} \rho_3}
  ={}& a \rho_1 + (1-a) \of{\rho_2 \bp_{a'} \rho_3} + \sqrt{a(1-a)} \, i[\rho_2 \bp_{a'} \rho_3, \rho_1] \\
  ={}& a \rho_1 + (1-a) \of[\Big]{a' \rho_2 + (1-a') \rho_3 + \sqrt{a'(1-a')} \, i[\rho_3, \rho_2]}              \label{eq:d1} \\
    &+ \sqrt{a(1-a)} \, i\sof[\Big]{a' \rho_2 + (1-a') \rho_3 + \sqrt{a'(1-a')} \, i[\rho_3, \rho_2], \; \rho_1} \label{eq:d2} \\
  ={}& a \rho_1 + (1-a) a' \rho_2 + (1-a)(1-a') \rho_3 \label{eq:aa} \\
    &+ (1-a)\sqrt{a'(1-a')}    \, i[\rho_3, \rho_2] \label{eq:com32} \\
    &+ \sqrt{a(1-a)} \, a'     \, i[\rho_2, \rho_1] \label{eq:com21} \\
    &+ \sqrt{a(1-a)} \, (1-a') \, i[\rho_3, \rho_1] \label{eq:com31} \\
    &+ \sqrt{a(1-a)} \sqrt{a'(1-a')} \, i[i[\rho_3, \rho_2], \rho_1]. \label{eq:dcom}
\end{align}

One can easily check that this nested operation inherits the covariance property (see \cref{eq:Covariance}) of the original operation.  Moreover, it also obeys a simple generalization of inequality~\eqref{eq:EPI}.  Indeed, it follows trivially from \cref{thm:EPI} that
\begin{align}
  f \of[\big]{ \rho_1 \bp_a \of{\rho_2 \bp_{a'} \rho_3} }
  &\geq a f(\rho_1) + (1-a) f\of[\big]{ \rho_2 \bp_{a'} \rho_3 } \\
  &\geq a f(\rho_1) + (1-a) a' f(\rho_2) + (1-a)(1-a') f(\rho_3),
\end{align}
for any symmetric and concave function $f: \D{d} \to \R$.

However, there is no reason why one should single out this particular way of combining three sates.  Indeed, there are other nested combinations, such as
\begin{equation}
  \of{\rho_1 \bp_b \rho_2} \bp_{b'} \rho_3
  = b'b \rho_1 + b'(1-b) \rho_2 + (1-b') \rho_3 + \dotsb \label{eq:cc}
\end{equation}
for some $b,b' \in [0,1]$.  Here the omitted terms are similar to \cref{eq:com32,eq:com21,eq:com31,eq:dcom}, except for the double commutator which in this case is $i[\rho_3, i[\rho_2, \rho_1]]$, thus making the expression different from the one above.  Since there are several other nested ways of combining three states, it is not clear which of them, if any, should be preferred.

Considering this, it is desirable to have a \emph{ternary} operation that treats each of the input states in the same way and combines them all at once.  Note that the coefficients
\begin{align}
  \of[\big]{a, (1-a) a', (1-a)(1-a')}
  \qquad \text{and} \qquad
  \of[\big]{b'b, b'(1-b), 1-b'}
\end{align}
in front of the states $\rho_1, \rho_2, \rho_3$ in \cref{eq:aa,eq:cc} are probability distributions.  It is thus natural to demand the combined state to be of the form
\begin{equation}
  p_1 \rho_1 + p_2 \rho_2 + p_3 \rho_3 + \dotsb \label{eq:ps}
\end{equation}
for some probability distribution $(p_1, p_2, p_3)$, where---unlike in \cref{eq:dcom}---all three states are treated symmetrically in the omitted higher-order terms (in particular, the third-order terms).

\subsection{Twelve nested ways of combining three states} \label{sect:12 ways}

We can use \cref{eq:+,eq:-} to work out all possible nested ways of combining three states.  Because of relation \eqref{eq:pm}, there are exactly 12 different nested combinations:
\begin{align}
 & \rho_1 \bs_a \of{\rho_2 \bs_{a'} \rho_3} \\
 & \rho_2 \bs_b \of{\rho_3 \bs_{b'} \rho_1} \\
 & \rho_3 \bs_c \of{\rho_1 \bs_{c'} \rho_2}
\end{align}
where each $\bs$ can independently be replaced by either $\bp$ or $\bm$.  In general, these all twelve expressions are different.  However, if we choose the parameters so that
\begin{align}
  (p_1, p_2, p_3)
  &= \of[\big]{a, a'(1-a), (1-a')(1-a)} \\
  &= \of[\big]{(1-b')(1-b), b, b'(1-b)} \\
  &= \of[\big]{c'(1-c), (1-c')(1-c), c},
\end{align}
for some probability distribution $(p_1, p_2, p_3)$, we can at least guarantee that all twelve expressions begin with the same convex combination $p_1 \rho_1 + p_2 \rho_2 + p_3 \rho_3 + \dotsb$.  Our goal is to obtain a more general ternary operation that continuously interpolates between these twelve expressions.

\subsection{Combining \texorpdfstring{$n$}{n} states at once}

Recall from \cref{eq:Ulambda} that $U_\lambda$ is a linear combination of two-qudit permutation matrices.  One way of generalizing this to $n$ systems is as follows:
\begin{equation}
  (\rho_1, \dotsc, \rho_n) \mapsto
  \Tr_{2,\dotsc,n} \sof[\Big]{U (\rho_1 \x \dotsb \x \rho_n) U\ct},
  \label{eq:boxn}
\end{equation}
where $\rho_1, \dotsc, \rho_n \in \D{d}$ are qudit states, $\Tr_{2,\dotsc,n}$ denotes the partial trace over all systems but the first, and the unitary matrix $U \in \U{d^n}$ is a linear combination of all $n!$ permutations that act on $n$ qudits.  The goal of this paper is to develop a better understanding of this special type of unitaries and the corresponding map resulting from \cref{eq:boxn}.  Here we particularly focus on the $n = 3$ case and leave it as an open problem to work out the details for general $n$.

\subsection{Graphical notation for evaluating partial traces} \label{sect:Contraction}

Let us work out the details of \cref{eq:boxn} for $n=3$.  Let $\rho_1, \rho_2, \rho_3 \in \D{d}$ and define
\begin{equation}
  \rho := \Tr_{2,3} \sof[\big]{U (\rho_1 \x \rho_2 \x \rho_3) U\ct}
  \label{eq:Box3}
\end{equation}
for some unitary matrix $U \in \U{d^3}$ such that
\begin{equation}
  U := \sum_{i=1}^6 \z_i Q_i,
  \label{eq:U3}
\end{equation}
where $\z_i \in \C$ and each $Q_i \in \U{d^3}$ is one of the six permutations that act on three qudits.  We can represent them using the following graphical notation (see \cite{WBC15} for more details):
\begin{align}
  Q_1 \, &:= \, \gr{S1} &
  Q_2 \, &:= \, \gr{S2} &
  Q_3 \, &:= \, \gr{S3} &
  Q_4 \, &:= \, \gr{S4} &
  Q_5 \, &:= \, \gr{S5} &
  Q_6 \, &:= \, \gr{S6}
  \label{eq:Qs}
\end{align}
One can easily recover the matrix representation of each $Q_i$ from this pictorial notation.  For example,
\begin{align}
  Q_2 \of[\big]{\ket{\psi_1} \x \ket{\psi_2} \x \ket{\psi_3}}
  = \Gr{\place{S2}{0}{0}\Tri{1}{2}{3}}
  = \Gr{\Tri{2}{3}{1}}
  = \ket{\psi_2} \x \ket{\psi_3} \x \ket{\psi_1},
  \label{eq:psi123}
\end{align}
for any $\ket{\psi_1}, \ket{\psi_2}, \ket{\psi_3} \in \C^d$, which is enough to determine all $d^3 \times d^3$ entries of the matrix $Q_2$.

The diagrams of $Q_i$ can be composed in a natural way and this operation is compatible with matrix multiplication (in fact, the diagrams as well as the matrices form a group).  For example,
\begin{equation}
  Q_4 Q_5
  = \Gr{\place{S4}{0}{0}\place{S5}{\X}{0}}
  = \gr{S2}
  = Q_2.
\end{equation}
Inverse permutations can be found by reversing the diagram:
\begin{equation}
  Q_2\ct
  = \gr{S2}\ct
  = \gr{S3}
  = Q_3.
\end{equation}
Except for $Q_2$ and $Q_3$, which are inverses of each other, all other $Q_i$'s are self-inverse, i.e., $Q_i\ct = Q_i$ when $i \notin \set{2,3}$.  \Cref{eq:psi123} extends easily to mixed states too---the subsystems of a product state are now permuted under the conjugation by $Q_i$.  For example,
\begin{align}
  Q_2 (\rho_1 \x \rho_2 \x \rho_3) Q_2\ct
  = \Gr{\place{S2}{0}{0}\place{Bxs}{0}{0}\place{S3}{2*\X}{0}}
  = \Gr{\Bx{2}{3}{1}}
  = \rho_2 \x \rho_3 \x \rho_1.
\end{align}

\newcommand{\evaluate}[3]{
  \xi_{#1#2}
 &= \z_#1 \bar{z}_#2 \Tr_{2,3} \sof*{
    \Gr{
      \place{S#1}{0*\X}{0*\Y}
      \place{Bxs}{0*\X}{0*\Y}
      \place{S#2}{2*\X}{0*\Y}
    }}
  = \z_#1 \bar{z}_#2 \Gr[0.3cm]{\Cl{#1}{#2}}
  = \z_#1 \bar{z}_#2 \, \Gr[-0.33cm]{\csuse{C#1#2}}
  = \z_#1 \bar{z}_#2 \, #3
}

Using this pictorial representation, we would like to expand $U$ on both sides of \cref{eq:Box3} and obtain an explicit formula for the output state $\rho$ in terms of the input states $\rho_k$ and the coefficients $\z_i$.  This requires computing the matrices
\begin{equation}
  \xi_{ij} := \Tr_{2,3} \sof[\big]{\z_i Q_i (\rho_1 \x \rho_2 \x \rho_3) \bar{\z}_j Q_j\ct},
\end{equation}
for each combination of $i,j \in \set{1, \dotsc, 6}$.  This daunting task becomes much more straightforward using the graphical notation for the partial trace~\cite{WBC15}.  For example,
\begin{align}
  \evaluate{1}{1}{\rho_1 \Tr \rho_2 \Tr \rho_3}, \\
  \evaluate{4}{1}{\rho_1 \Tr(\rho_2 \rho_3)},    \\
  \evaluate{5}{1}{\rho_2 \rho_1 \Tr \rho_3},     \\
  \evaluate{2}{1}{\rho_2 \rho_3 \rho_1}.
\end{align}
The remaining cases are similar and are summarized in \cref{fig:Contraction}.

We can find the resulting state $\rho = \sum_{i,j=1}^6 \xi_{ij}$, defined in \cref{eq:Box3}, by putting all 36 terms together:
\begin{falign}
  \rho
  {}={}& \of[\big]{ \abs{\z_1}^2 + \abs{\z_4}^2 + 2 \Re(\z_1 \bar{\z}_4) \Tr(\rho_2 \rho_3) } \rho_1 \nonumber \\
  {}+{}& \of[\big]{ \abs{\z_2}^2 + \abs{\z_5}^2 + 2 \Re(\z_2 \bar{\z}_5) \Tr(\rho_3 \rho_1) } \rho_2 \nonumber \\
  {}+{}& \of[\big]{ \abs{\z_3}^2 + \abs{\z_6}^2 + 2 \Re(\z_3 \bar{\z}_6) \Tr(\rho_1 \rho_2) } \rho_3 \nonumber \\[0.2cm]
  {}+{}& \of[\big]{ \z_1 \bar{\z}_5 + \z_4 \bar{\z}_2 } \rho_1 \rho_2 + c.t. \nonumber \\
  {}+{}& \of[\big]{ \z_2 \bar{\z}_6 + \z_5 \bar{\z}_3 } \rho_2 \rho_3 + c.t. \label{eq:Magic} \\
  {}+{}& \of[\big]{ \z_3 \bar{\z}_4 + \z_6 \bar{\z}_1 } \rho_3 \rho_1 + c.t. \nonumber \\[0.2cm]
  {}+{}& \of[\big]{ \z_2 \bar{\z}_1 + \z_5 \bar{\z}_4 } \rho_2 \rho_3 \rho_1 + c.t. \nonumber \\
  {}+{}& \of[\big]{ \z_3 \bar{\z}_2 + \z_6 \bar{\z}_5 } \rho_3 \rho_1 \rho_2 + c.t. \nonumber \\
  {}+{}& \of[\big]{ \z_1 \bar{\z}_3 + \z_4 \bar{\z}_6 } \rho_1 \rho_2 \rho_3 + c.t. \nonumber
\end{falign}
where $c.t.$ stands for the conjugate transpose of the previous term.

\begin{figure}
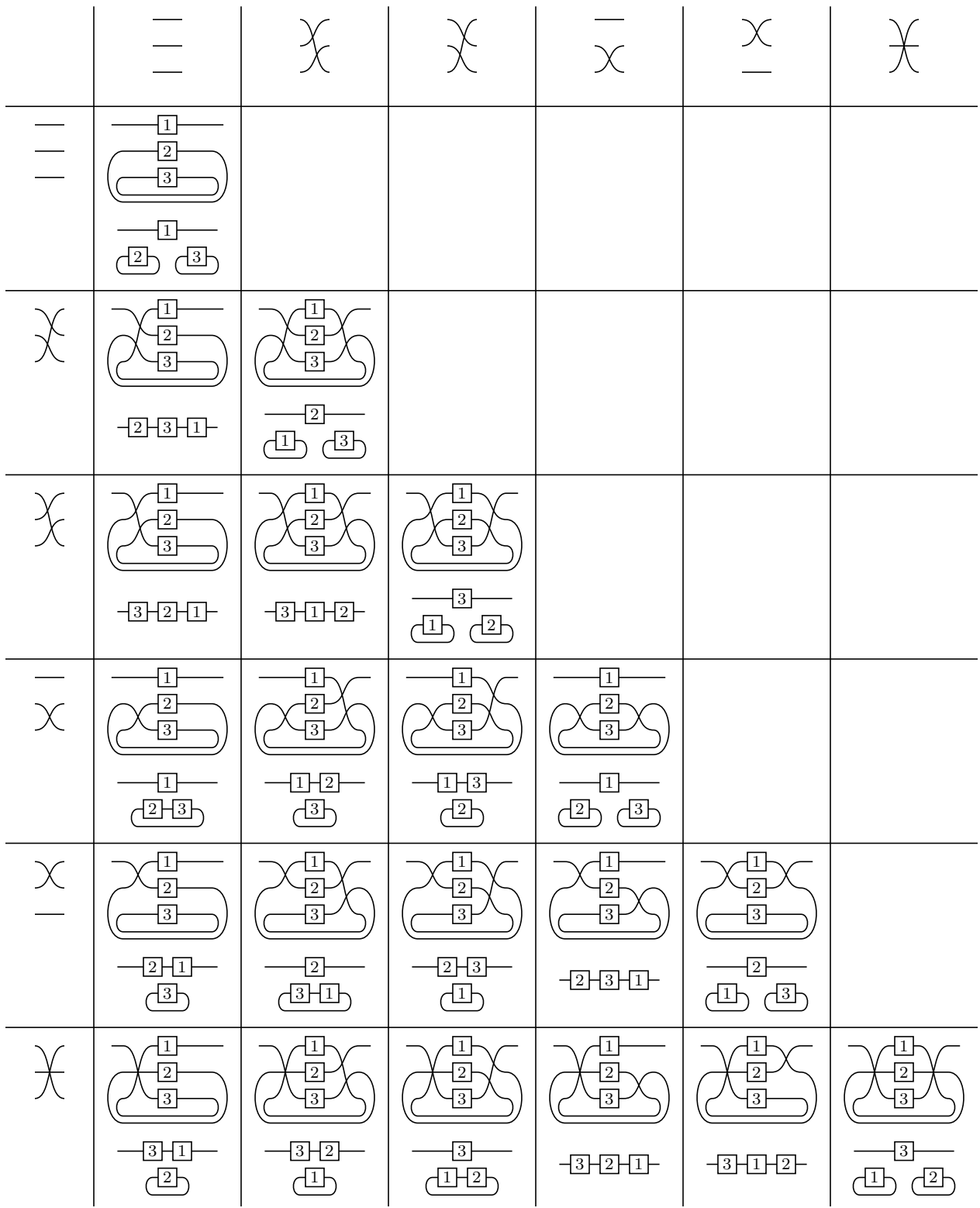

\centering
\Gr{
  \def\x{5*\X}
  \def\y{-7*\Y}
  \foreach \i in {1,...,6} {
    \draw (\X,\y*\i+2.7*\Y) -- (\x*7-\X,\y*\i+2.7*\Y);
    \draw (\x*\i-\X,-0.5*\Y) -- (\x*\i-\X,\y*7+2.9*\Y);
    \place{T\i}{\X cm+\x*\i}{-2*\Y}
    \place{S\i}{2*\X}{\Y cm+\i*\y}
    \foreach \j in {\i,...,6} {
      \Place{\Cl{\j}{\i}}{\x*\i}{\y*\j}
      \pgfmathparse{\x*\i}     \let\xx=\pgfmathresult
      \pgfmathparse{\y*\j-2*\Y}\let\yy=\pgfmathresult
      \place{C\j\i}{\xx}{\yy}
    }
  }
}
\caption{\label{fig:Contraction}Tensor contraction diagrams for computing $\xi_{ij} := \Tr_{2,3} \sof[\big]{Q_i (\rho_1 \x \rho_2 \x \rho_3) Q_j\ct}$ for each pair of three qudit permutations $Q_i$ (rows) and $Q_j\ct$ (columns), for $i \geq j$.  By combining all 36 terms we obtain \cref{eq:Magic}.}
\end{figure}

\subsection{Unitarity and independence constraints}

Before we can call \cref{eq:Magic} a generalization of \cref{eq:box}, we need to answer the following two questions:

\begin{enumerate}[Q1.]
  \item \label{Q1} \textbf{Unitarity:}
  Under what constraints on the coefficients $\z_i \in \C$ is the matrix $U$ in \cref{eq:U3} unitary?
  \item \label{Q2} \textbf{Independence:}
  Are additional constraints needed to make the $\Tr (\rho_i \rho_j)$ terms in \cref{eq:Magic} vanish?
\end{enumerate}

\newcommand{\Q}[1]{\hyperref[Q#1]{Q#1}}

We need to demand the unitarity of $U$ since we want $\rho$ to be a valid quantum state---this is very natural considering \cref{eq:Box3}.  Note that \cref{prop:General U} already answers \Q1 for $n = 2$.  We will answer \Q1 in full generality using representation theory: \cref{cor:Unitary} characterizes, for any $n \geq 1$, when a linear combination $U := \sum_{\pi \in \S_n} \z_\pi Q_\pi$ is unitary.

The reason for asking \Q2 is because we would like to control the ``amount'' of each $\rho_i$ in the output state.  This is important in the context of inequality~\eqref{eq:EPI}, which borrows the coefficients $\lambda$ and $1-\lambda$ directly from \cref{eq:box}.  Inspired by this, we would also like the coefficients in \cref{eq:Magic}---especially, those of the first-order terms $\rho_i$---to be independent of the states themselves.  When $n=2$, this happens automatically, see \cref{eq:box}, and leads to the particularly simple form of inequality~\eqref{eq:EPI}.  For $n=3$, however, \cref{eq:Magic} has an additional $\Tr (\rho_i \rho_j)$ term within each coefficient of $\rho_i$.  As we will see, these terms survive the unitarity constraint from \Q1, so we can remove them only by demanding directly that
\begin{equation}
  \Re(\z_1 \bar{\z}_4) = \Re(\z_2 \bar{\z}_5) = \Re(\z_3 \bar{\z}_6) = 0.
  \label{eq:Res}
\end{equation}
This turns the first-order terms of \cref{eq:Magic} into a linear (in fact, a convex) combination of $\rho_i$, just like in \cref{eq:box} for $n=2$.  While \cref{eq:Res} might seem a bit arbitrary, it can be imposed in a fairly natural way (see \cref{sect:Nice}) and leads to some further nice structure in \cref{eq:Magic} (see \cref{sect:Double commutators}).

\section{When is a linear combination of permutations unitary?}

We begin by first answering \Q1, which will require some representation theory.  Particularly relevant to us is reference~\cite{Childs} that provides a very concise introduction to representation theory and Fourier analysis of non-abelian groups.  For more background on representation theory of finite groups, see the standard reference~\cite{Serre}.

\subsection{Background on representation theory}

Let $\U{d}$ denote the set of all $d \times d$ unitary matrices.  A $d$-dimensional \emph{representation} of a finite group $G$ is a map $\rep: G \to \U{d}$ such that $\rep(gh) = \rep(g) \rep(h)$ for all $g, h \in G$.  We will always use $\e$ to denote the identity element of $G$.  Note that $\rep(\e) = I_d$, the $d \times d$ identity matrix, and $\rep(g^{-1}) = \rep(g)\ct$.  We call $d$ the \emph{dimension} of the representation $\rep$.

Let $\S_n$ denote the \emph{symmetric group} consisting of all $n!$ permutations acting on $n$ elements.  We write $\pi(i) = j$ to mean that permutation $\pi \in \S_n$ maps $i$ to $j$, and we write $\pi^{-1}$ to denote the inverse permutation of $\pi$.  The following are four different representations of the symmetric group (two of them, $Q_\pi$ and $L_\pi$, will play an important role later in the paper).

\begin{example}[Representations of $\S_n$]
The symmetric group $\S_n$ can be represented by permutation matrices in several different ways. In each case we write down how the matrix associated to permutation $\pi \in \S_n$ acts in the standard basis:
\begin{itemize}
\item \emph{natural representation} $P_\pi \in \U{n}$:
\begin{equation}
  P_\pi : \ket{i} \mapsto \ket{\pi(i)},
  \qquad \forall i \in \set{1,\dotsc,n};
\end{equation}
\item \emph{tensor representation} $Q_\pi \in \U{d^n}$:
\begin{equation}
  Q_\pi : \ket{i_1} \x \dotsb \x \ket{i_n}
  \mapsto \ket{i_{\pi^{-1}(1)}} \x \dotsb \x \ket{i_{\pi^{-1}(n)}},
  \qquad \forall i_1, \dotsc, i_n \in \set{1,\dotsc,d};
  \label{eq:Qpi}
\end{equation}
\item \emph{left regular} and \emph{right regular} representations $L_\pi, R_\pi \in \U{\abs{\S_n}}$:
\begin{align}
  L_\pi &: \ket{\sigma} \mapsto \ket{\pi \sigma},      & \forall \sigma \in \S_n, \\
  R_\pi &: \ket{\sigma} \mapsto \ket{\sigma \pi^{-1}}, & \forall \sigma \in \S_n.
\end{align}
\end{itemize}
We call $d$ the \emph{local dimension} of the tensor representation (we will typically required that $d \geq n$).  Note that for the regular representations, the standard basis of the underlying space is labeled by permutations themselves, so the space has $n!$ dimensions: $\C^{\S_n} \cong \C^{n!}$.  Finally, note that the regular representations can be defined for any finite group $G$ in a similar manner.
\end{example}

Since $Q_\pi$ will play an important role later, let us verify that it is indeed a representation\footnote{It would not be a representation if we would use $\pi$ instead of $\pi^{-1}$ on the RHS of \cref{eq:Qpi}.} of $\S_n$.  First, note that
\begin{equation}
  Q_\pi \bigotimes_{i=1}^n \ket{\psi_i}
  = \bigotimes_{i=1}^n \ket{\psi_{\pi^{-1}(i)}}
  = \bigotimes_{i=1}^n \ket{\phi_i}
\end{equation}
where $\ket{\phi_i} := \ket{\psi_{\pi^{-1}(i)}}$.  Following the same rule, we see that
\begin{equation}
  Q_\sigma \bigotimes_{i=1}^n \ket{\phi_i}
  = \bigotimes_{i=1}^n \ket{\phi_{\sigma^{-1}(i)}}
  = \bigotimes_{i=1}^n \ket{\psi_{\pi^{-1}(\sigma^{-1}(i))}}
  = \bigotimes_{i=1}^n \ket{\psi_{(\sigma\pi)^{-1}(i)}}.
  \label{eq:Qsp}
\end{equation}
In other words, $Q_\sigma Q_\pi$ acts in exactly the same way as $Q_{\sigma\pi}$.

\subsection{From permutations to unitaries: the importance of the left regular representation}

Consider the following linear combination:
\begin{equation}
  U := \sum_{\pi \in \S_n} \z_\pi Q_\pi,
\end{equation}
where $\z_\pi \in \C$ and $Q_\pi$ are the matrices from the tensor representation of $\S_n$.  Can we parametrize in some simple way all coefficient tuples $\z := \of{\z_\pi : \pi \in \S_n} \in \C^{\S_n}$ such that $U$ is unitary simultaneously for all local dimensions $d \geq 1$?  We will provide an answer to this question using representation theory.

Let us start by first reducing this problem from the tensor representation $Q_\pi$ to the left regular representation $L_\pi$.  As a first step, we need to show that the matrices $L_\pi$ are linearly independent (this holds more generally, i.e., not just for $\S_n$ but for any finite group $G$).

\begin{proposition}\label{prop:Independence of L}
For any finite group $G$, the matrices $\set{L_g : g \in G}$ have disjoint supports (i.e., locations of non-zero entries) and thus are linearly independent.
\end{proposition}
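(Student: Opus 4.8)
The plan is to prove the claim directly from the definition of the left regular representation $L_\pi : \ket{\sigma} \mapsto \ket{\pi\sigma}$. The key observation is that the underlying vector space $\C^{\S_n}$ (or more generally $\C^G$) has a standard basis indexed by the group elements themselves, and each $L_g$ simply permutes these basis vectors. Concretely, the matrix entry $\bra{\sigma} L_g \ket{\tau}$ equals $1$ if $g\tau = \sigma$ and $0$ otherwise. So the set of nonzero entries of $L_g$ is exactly $\set{(\sigma,\tau) : \sigma = g\tau}$, i.e.\ the graph of left-multiplication by $g$.

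First I would fix a pair of basis indices $(\sigma,\tau)$ and ask for which $g \in G$ the $(\sigma,\tau)$ entry of $L_g$ is nonzero. By the formula above, this entry is nonzero precisely when $g\tau = \sigma$, which (since $\tau$ is invertible in the group) forces $g = \sigma\tau^{-1}$, a \emph{unique} group element. Hence each position $(\sigma,\tau)$ in the matrix is ``occupied'' by at most one of the $L_g$; in other words the supports $\spn$ of the various $L_g$ are pairwise disjoint subsets of the index set $G \times G$. This is the heart of the proposition, and it is essentially immediate from the cancellation property of groups.

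Once disjointness of supports is established, linear independence follows by a one-line argument: suppose $\sum_{g \in G} c_g L_g = 0$. Looking at the $(\sigma,\tau)$ entry for the unique $g$ with $g = \sigma\tau^{-1}$, the left-hand side equals $c_g$ (all other $L_h$ vanish at that position), so $c_g = 0$. Since every $g \in G$ arises as $\sigma\tau^{-1}$ for some choice of indices (e.g.\ take $\tau = \e$ and $\sigma = g$), we conclude $c_g = 0$ for all $g$, giving linear independence.

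I do not anticipate a genuine obstacle here: the result is a standard structural fact about regular representations, and the only thing to be careful about is setting up the matrix-entry bookkeeping cleanly so that the uniqueness of $g = \sigma\tau^{-1}$ is transparent. The mild subtlety worth stating explicitly is \emph{why} disjoint supports imply linear independence---namely that a nontrivial linear dependence would require cancellation among the terms, which is impossible when no two terms share a nonzero matrix position. I would phrase the proof to make both the disjointness and this implication visible in a couple of sentences each.
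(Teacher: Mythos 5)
Your proof is correct and follows essentially the same route as the paper: computing the matrix entry $\bra{x} L_g \ket{y}$ and observing that it is nonzero only for the unique element $g = xy^{-1}$, so the supports are disjoint and linear independence follows. The only difference is cosmetic---you spell out the disjoint-supports-implies-independence step explicitly, which the paper treats as immediate.
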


\begin{proof}
For any $x,y,g \in G$,
\begin{equation}
  \bra{x} L_g \ket{y}
  = \braket{x}{g y}
  = \begin{cases}
      1 & \text{if } g = x y^{-1}, \\
      0 & \text{otherwise}.
    \end{cases}
  \label{eq:xy}
\end{equation}
So, for any fixed $x$ and $y$, there is exactly one matrix, namely $L_{x y^{-1}}$, with a non-zero entry in row $x$ and column $y$.  Hence, the matrices $L_g$ have disjoint supports and thus are linearly independent.
\end{proof}

\begin{remark}
The matrices $L_g$ form a (non-commutative) \emph{association scheme}~\cite{Ban93}.  Their linear span is known as the \emph{Bose--Mesner algebra} of this scheme.  In representation theory, it goes by the name of the \emph{group algebra} of $G$ and is denoted by $\C[G]$.  Our \cref{thm:Unitary L} (see below) characterizes all unitary matrices within this algebra.
\end{remark}

\begin{example}[$G = \S_3$]
Because of \cref{prop:Independence of L}, any linear combination of the matrices $L_g$ has a particularly nice structure.  For example, if $G = \S_3$ and we order the permutations according to \cref{eq:Qs} then
\begin{equation}
  \sum_{k=1}^6 k L_k =
  \mx{
    1 & 3 & 2 & 4 & 5 & 6 \\
    2 & 1 & 3 & 6 & 4 & 5 \\
    3 & 2 & 1 & 5 & 6 & 4 \\
    4 & 6 & 5 & 1 & 2 & 3 \\
    5 & 4 & 6 & 3 & 1 & 2 \\
    6 & 5 & 4 & 2 & 3 & 1
  }.
  \label{eq:Ls}
\end{equation}
One can easily read off the matrix representation of each $L_i$ from this.
\end{example}

Let us now show that the matrices $Q_\pi$ are also linearly independent, when $d$ is sufficiently large.

\begin{lemma}\label{Independence of Q}
If $d \geq n$ then $Q_\pi \cong L_\pi \oplus \rep(\pi)$ for some representation $\rep$ of $\S_n$, so the matrices $\set{Q_\pi : \pi \in \S_n}$ are linearly independent.
\end{lemma}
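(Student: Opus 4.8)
The plan is to exhibit an explicit $Q$-invariant subspace of $(\C^d)\xp{n}$ on which $\S_n$ acts by the regular representation, and then to peel it off using complete reducibility. Since $d \geq n$, we may choose $n$ distinct basis labels, say $1, \dotsc, n$, and set $W := \spn\set{\ket{\sigma(1)} \x \dotsb \x \ket{\sigma(n)} : \sigma \in \S_n}$, the span of the $n!$ tensor-product basis vectors whose factors carry pairwise distinct labels. These vectors are orthonormal, so $\dim W = n!$, which is exactly the dimension of the regular representation; this is the only place where the hypothesis $d \geq n$ is used, and it is essential.

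First I would check that $W$ is $Q$-invariant: each $Q_\pi$ merely permutes the tensor factors, so it preserves the property of having distinct labels and therefore maps the spanning set of $W$ into itself. A short computation from \cref{eq:Qpi} with $i_k = \sigma(k)$ gives $Q_\pi\of[\big]{\ket{\sigma(1)} \x \dotsb \x \ket{\sigma(n)}} = \ket{(\sigma\pi^{-1})(1)} \x \dotsb \x \ket{(\sigma\pi^{-1})(n)}$. Under the identification $\ket{\sigma} \leftrightarrow \ket{\sigma(1)} \x \dotsb \x \ket{\sigma(n)}$ this is precisely $\ket{\sigma} \mapsto \ket{\sigma\pi^{-1}}$, the right regular representation $R_\pi$. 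Since $R \cong L$ via the intertwiner $\ket{\sigma} \mapsto \ket{\sigma^{-1}}$, we conclude $Q|_W \cong L$.

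Next, by Maschke's theorem the orthogonal complement $W^\perp$ is also $Q$-invariant, so setting $\rep := Q|_{W^\perp}$ yields a representation of $\S_n$ with $Q_\pi \cong L_\pi \op \rep(\pi)$ simultaneously for all $\pi$, which is the first assertion. Linear independence then follows immediately: conjugating any relation $\sum_\pi \z_\pi Q_\pi = 0$ by the intertwiner realizing this isomorphism gives $\sum_\pi \z_\pi L_\pi = 0$ in the $W$-block, whence $\z_\pi = 0$ for all $\pi$ by \cref{prop:Independence of L}. Equivalently, one sees independence directly, since the $n!$ vectors $Q_\pi\of[\big]{\ket{1} \x \dotsb \x \ket{n}}$ are distinct standard basis vectors.

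The one point demanding care is the index bookkeeping that makes the action on $W$ come out as the regular representation (left versus right is immaterial here, as they are isomorphic), together with the explicit appeal to complete reducibility to produce the complement $\rep$. I do not expect a genuine obstacle. A more conceptual but heavier alternative would invoke Schur--Weyl duality: there $Q \cong \bigoplus_{\lambda} \of[\big]{\dim W_\lambda^{\U{d}}} V_\lambda$ over partitions $\lambda$ of $n$, and for $d \geq n$ every such $\lambda$ has at most $n \leq d$ rows while the multiplicity $\dim W_\lambda^{\U{d}}$ (semistandard tableaux of shape $\lambda$ with entries $\leq d$) dominates $\dim V_\lambda$ (standard tableaux), so each $\S_n$-irreducible occurs in $Q$ with at least its regular multiplicity; this again exhibits $L$ as a subrepresentation. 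I would favour the explicit subspace argument above, as it is self-contained and avoids this machinery.
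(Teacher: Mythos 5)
Your proof is correct and takes essentially the same route as the paper: both exhibit the span of the $n!$ distinct-label product basis vectors as a $Q$-invariant subspace carrying the regular representation (the complement coming from unitarity/Maschke), and then deduce linear independence from \cref{prop:Independence of L}. The only difference is bookkeeping: the paper labels the basis vector $\bigotimes_{i=1}^n \ket{\pi^{-1}(i)}$ by $\pi$, so the restricted action is the left regular representation $L_\pi$ directly, whereas your labeling yields $R_\pi$ and needs the extra (correct) intertwiner $\ket{\sigma} \mapsto \ket{\sigma^{-1}}$.
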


\begin{proof}
It suffices to show that $Q_\pi$ are linearly independent even when restricted to some invariant subspace of $\C^{d^n}$.  Let $\ket{\Psi_\pi} := Q_\pi \bigotimes_{i=1}^n \ket{i} = \bigotimes_{i=1}^n \ket{\pi^{-1}(i)}$ (we need $d \geq n$ for this to make sense).  Note that $Q_\sigma \ket{\Psi_\pi} = Q_\sigma \bigotimes_{i=1}^n \ket{\pi^{-1}(i)} = \bigotimes_{i=1}^n \ket{\pi^{-1}(\sigma^{-1}(i))} = \bigotimes_{i=1}^n \ket{(\sigma\pi)^{-1}(i)} = \ket{\Psi_{\sigma\pi}}$, just like in \cref{eq:Qsp}.  So $\mc{L} := \spn \set{\ket{\Psi_\pi} : \pi \in \S_n}$ is an invariant subspace of $\C^{d^n}$ under the action of the tensor representation.  The vectors $\ket{\Psi_\pi}$ have disjoint supports and hence form an orthonormal basis of the subspace $\mc{L}$.  Moreover, the tensor representation $Q_\pi$ acts as the left regular representation $L_\pi$ in this subspace.  In other words,
\begin{equation}
  Q_\pi \cong L_\pi \oplus \rep(\pi)
  \label{eq:Q and L}
\end{equation}
for some (reducible) representation $\rep$ acting on the orthogonal complement of $\mc{L}$ in $\C^{d^n}$.  According to \cref{prop:Independence of L}, the matrices $\set{L_\pi : \pi \in \S_n}$ are linearly independent, so the result follows.
\end{proof}

\begin{lemma}\label{lem:Regular rep}
Let $\z_\pi \in \C$, for each $\pi \in \S_n$.  If $d \geq n$ then $\sum_{\pi \in \S_n} \z_\pi Q_\pi$ is unitary if and only if $\sum_{\pi \in \S_n} \z_\pi L_\pi$ is unitary (the reverse implication holds for any $d \geq 1$).
\end{lemma}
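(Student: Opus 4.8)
The plan is to relate the two operators $U_Q := \sum_{\pi \in \S_n} \z_\pi Q_\pi$ and $U_L := \sum_{\pi \in \S_n} \z_\pi L_\pi$ through a single observation: for \emph{any} representation $\rho$ of $\S_n$, decomposing $\rho$ into irreducibles simultaneously block-diagonalizes $\sum_\pi \z_\pi \rho(\pi)$. Concretely, in a basis where $\rho \cong \bigoplus_\mu m_\mu\, \mu$ (with $\mu$ ranging over the irreducibles of $\S_n$ and $m_\mu$ their multiplicities), the operator $\sum_\pi \z_\pi \rho(\pi)$ takes the form $\bigoplus_\mu m_\mu \left(\sum_\pi \z_\pi \mu(\pi)\right)$. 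Since a block-diagonal matrix is unitary exactly when each block is, and a unitary change of basis preserves unitarity, the unitarity of $\sum_\pi \z_\pi \rho(\pi)$ is controlled entirely by the unitarity of the irreducible blocks $U_\mu := \sum_\pi \z_\pi \mu(\pi)$ for the $\mu$ that occur in $\rho$.

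For the forward direction (assuming $d \geq n$), I would invoke \cref{Independence of Q}, which supplies a single unitary $W$ with $W Q_\pi W\ct = L_\pi \oplus \rep(\pi)$ for all $\pi$ simultaneously. By linearity $W U_Q W\ct = U_L \oplus U_\rep$, where $U_\rep := \sum_\pi \z_\pi \rep(\pi)$. Hence if $U_Q$ is unitary then so is $U_L \oplus U_\rep$, and therefore so is its summand $U_L$.

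For the reverse direction (any $d \geq 1$), the key fact is that the left regular representation contains \emph{every} irreducible of $\S_n$; indeed $L \cong \bigoplus_\mu d_\mu\, \mu$, each $\mu$ appearing with multiplicity $d_\mu \geq 1$. Thus if $U_L$ is unitary, the block-diagonalization above forces every irreducible block $U_\mu$ to be unitary. Now for arbitrary $d$ the tensor representation $Q$ is simply some representation of $\S_n$, so it decomposes into a subcollection of these same irreducibles; hence $U_Q$ is, up to a unitary change of basis, a direct sum of copies of the already-unitary blocks $U_\mu$, and is therefore unitary. This argument needs no hypothesis on $d$, which is precisely why the reverse implication holds for all $d \geq 1$.

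The main points to handle with care are that representation equivalence provides a \emph{single} intertwiner $W$ valid for all group elements at once---this is what lets the change of basis pass through the linear combination $\sum_\pi \z_\pi(\cdot)$---and the standard decomposition $L \cong \bigoplus_\mu d_\mu\, \mu$ ensuring that $L$ detects all irreducibles. Neither step requires real computation; the only genuine content is the recognition that unitarity of a linear combination of a representation is determined irreducible-block by irreducible-block, together with the ``universality'' of $L$ in containing every irreducible.
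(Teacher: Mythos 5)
Your proof is correct and takes essentially the same route as the paper: the forward direction rests on the decomposition $Q_\pi \cong L_\pi \oplus \rep(\pi)$ from \cref{Independence of Q}, and the reverse direction on the standard fact that every irreducible of $\S_n$ occurs in the left regular representation, so unitarity of $\sum_{\pi} \z_\pi L_\pi$ forces unitarity of each irreducible block and hence of $\sum_{\pi} \z_\pi Q_\pi$ for any $d \geq 1$. Your explicit statement that decomposition into irreducibles block-diagonalizes the whole linear combination via a single intertwiner is exactly the (implicit) mechanism in the paper's proof, so there is nothing to correct.
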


\begin{proof}
As we noted in the proof of \cref{Independence of Q}, for $d \geq n$ there is a subspace $\mc{L}$ of $\C^{d^n}$ where $Q_\pi$ acts as $L_\pi$, see \cref{eq:Q and L}; this immediately gives the forward implication.  For the reverse implication, note that we can decompose any representation of a finite group into a direct sum of its irreducible representations or \emph{irreps}~\cite{Serre}.  If we obtain such decomposition for the left regular representation, a standard result from representation theory says that each irrep will appear at least once in this decomposition~\cite{Serre}.  Thus, the unitarity of $\sum_{\pi \in \S_n} \z_\pi L_\pi$ is equivalent to the simultaneous unitarity of $\sum_{\pi \in \S_n} \z_\pi \rep(\pi)$ for all irreps $\rep$ of $\S_n$.  Since the tensor representation $Q_\pi$ can also be decomposed as a direct sum of irreps, for any $d \geq 1$, we conclude that $\sum_{\pi \in \S_n} \z_\pi Q_\pi$ must therefore be unitary.
\end{proof}

According to \cref{lem:Regular rep}, our problem now reduces to analyzing the left regular representation of $\S_n$ and characterizing when $U := \sum_{\pi \in \S_n} \z_\pi L_\pi$ is unitary.  We will analyze this in the Fourier basis, where the left regular representation decomposes as a direct sum of irreducible representations.  This strategy does not rely on any special properties of $\S_n$, so we might as well work with an arbitrary finite group $G$ for the sake of generality.

\subsection{Fourier transform over finite groups}

Recall from~\cite{Childs} that the \emph{Fourier transform} over a finite group $G$ is the unitary matrix
\begin{equation}
  F := \sum_{g \in G} \sum_{\rep \in \hat{G}}
       \sqrt{\frac{d_\rep}{\abs{G}}} \sum_{j,k = 1}^{d_\rep}
       \rep(g)_{j,k} \ket{(\rep,j,k)} \bra{g},
  \label{eq:F long}
\end{equation}
where $\hat{G}$ denotes the set of irreducible representations of $G$, $d_\rep$ is the dimension of irrep $\rep$, and $\rep(g)_{j,k}$ are the matrix elements of $\rep(g)$.  The output space of $F$ is labeled by triples $(\rep,j,k)$ and has the same dimension as the input space---indeed, it is a standard result in representation theory that
\begin{equation}
  \sum_{\rep \in \hat{G}} d_\rep^2 = \abs{G}.
  \label{eq:dims}
\end{equation}
One can easily check that $F$ is unitary using the orthogonality of characters~\cite{Childs}.

We will need the following standard result from representation theory~\cite{Childs,Serre}.  Let $\hat{L}_g := F L_g F\ct$ denote $L_g$ in the Fourier basis.  Then
\begin{equation}
  \hat{L}_g = \bigoplus_{\rep \in \hat{G}} \sof[\big]{\rep(g) \x I_{d_\rep}}
  \label{eq:hatL}
\end{equation}
where $I_{d_\rep}$ is the $d_\rep \times d_\rep$ identity matrix.  In other words, $\hat{L}_g$ is block-diagonal and contains each irrep $\rep$ the number of times equal to its dimension $d_\rep$.

\subsection{Characterization of unitary linear combinations}

Let us first show a simple preliminary fact.  Recall from \cref{prop:Independence of L} that $\set{L_g : g \in G}$ is a linearly independent set.  In fact, when properly normalized, these matrices are orthonormal.

\begin{proposition}\label{prop:Basis}
For any finite group $G$, the matrices $\set{L_g / \sqrt{\abs{G}} : g \in G}$ are orthonormal with respect to the Hilbert--Schmidt inner product $\ip{A,B} := \Tr(A\ct B)$.
\end{proposition}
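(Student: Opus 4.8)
The plan is to reduce the Hilbert--Schmidt inner product of two such matrices to the trace of a single regular-representation matrix, and then evaluate that trace directly. Since $L$ is a representation, each $L_g$ is unitary with $L_g\ct = L_g^{-1} = L_{g^{-1}}$, and the homomorphism property gives $L_{g^{-1}} L_h = L_{g^{-1}h}$. Hence
\begin{equation}
  \ip{L_g, L_h} = \Tr(L_g\ct L_h) = \Tr(L_{g^{-1}h}),
\end{equation}
so everything comes down to computing $\Tr(L_k)$ for an arbitrary $k \in G$.

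For this I would reuse the explicit matrix entries already recorded in the proof of \cref{prop:Independence of L}. The diagonal entries of $L_k$ are $\bra{x} L_k \ket{x} = \braket{x}{kx}$, which equals $1$ precisely when $kx = x$, i.e.\ when $k = \e$, and $0$ otherwise. Summing over the $\abs{G}$ basis vectors gives $\Tr(L_k) = \abs{G}$ if $k = \e$ and $\Tr(L_k) = 0$ otherwise. Applying this with $k = g^{-1}h$ yields $\ip{L_g, L_h} = \abs{G}$ when $g = h$ and $0$ when $g \neq h$; dividing each matrix by $\sqrt{\abs{G}}$ then gives the claimed orthonormality.

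An even more direct route bypasses the representation identity entirely: \cref{prop:Independence of L} already shows the $L_g$ have pairwise disjoint supports, and each $L_g$ is a permutation matrix with exactly $\abs{G}$ nonzero entries, all equal to $1$. Thus $\Tr(L_g\ct L_h)$ simply counts the positions where $L_g$ and $L_h$ are simultaneously nonzero, which is $0$ for $g \neq h$ and $\abs{G}$ for $g = h$. Either way the computation is entirely routine; I do not expect a genuine obstacle here, since the only nontrivial ingredient---the disjoint-support structure of the left regular representation---is already in hand from \cref{prop:Independence of L}, and the statement is essentially a normalized restatement of it together with the fact that each $L_g$ carries exactly $\abs{G}$ unit entries.
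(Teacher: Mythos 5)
Your first argument is exactly the paper's proof: reduce $\ip{L_g,L_h}$ to $\Tr L_{g^{-1}h}$ via $L_g\ct L_h = L_{g^{-1}h}$ and then show $\Tr L_k = \abs{G}\,\delta_{k,\e}$ using the matrix entries from \cref{prop:Independence of L}. It is correct, and your alternative entry-counting route is a valid (if redundant) restatement of the same computation.
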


\begin{proof}
First, note from \cref{eq:xy} that, for any $g \in G$,
\begin{equation}
  \Tr L_g
  = \sum_{h \in G} \bra{h} L_g \ket{h}
  = \sum_{h \in G} \delta_{g,\e}
  = \abs{G} \delta_{g,\e}.
\end{equation}
Thus, for any $a,b \in G$, the corresponding Hilbert--Schmidt inner product is
\begin{equation}
  \frac{1}{\abs{G}} \ip{L_a,L_b}
  = \frac{1}{\abs{G}} \Tr \of[\big]{L_a\ct L_b}
  = \frac{1}{\abs{G}} \Tr L_{a^{-1} b}
  = \delta_{a,b}.
\end{equation}
Hence the normalized matrices $L_g / \sqrt{\abs{G}}$ are orthonormal.
\end{proof}

\begin{theorem}\label{thm:Unitary L}
Let $G$ be a finite group, $L_g$ be its left regular representation, and $\z_g \in \C$, for each $g \in G$.  Then $\sum_{g \in G} \z_g L_g$ is unitary if and only if
\begin{equation}
  \z_g
  = \sum_{\rep \in \hat{G}}
    \frac{d_\rep}{\abs{G}}
    \Tr \of[\big]{\rep(g)\ct U_\rep},
  \label{eq:zg}
\end{equation}
for some choice of $U_\rep \in \U{d_\rep}$, one for each irrep $\rep$ of $G$.
\end{theorem}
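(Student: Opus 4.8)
The plan is to pass to the Fourier basis, where by \cref{eq:hatL} the left regular representation block-diagonalizes, and thereby reduce the unitarity of $\sum_{g \in G} \z_g L_g$ to the unitarity of its Fourier coefficients. First I would apply the Fourier transform $F$ to $U := \sum_{g \in G} \z_g L_g$. Writing $\hat{U} := F U F\ct$ and combining \cref{eq:hatL} with linearity gives
\begin{equation}
  \hat{U} = \sum_{g \in G} \z_g \hat{L}_g
  = \bigoplus_{\rep \in \hat{G}} \of[\Big]{ A_\rep \x I_{d_\rep} },
  \qquad A_\rep := \sum_{g \in G} \z_g \rep(g).
\end{equation}
Since $F$ is unitary, $U$ is unitary if and only if $\hat{U}$ is. As $\hat{U}$ is block-diagonal with blocks $A_\rep \x I_{d_\rep}$, and $A_\rep \x I_{d_\rep}$ is unitary exactly when $A_\rep$ is, I conclude that $U$ is unitary if and only if $A_\rep \in \U{d_\rep}$ for every irrep $\rep$.

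It then remains to identify the coefficient tuples $\z = (\z_g)$ for which all $A_\rep$ are unitary, and to match them with \cref{eq:zg}. The key observation is that the linear map $\z \mapsto (A_\rep)_{\rep \in \hat{G}}$ from $\C^G$ to $\bigoplus_{\rep \in \hat{G}} \mathrm{Mat}_{d_\rep}(\C)$ is a bijection---the dimensions agree by \cref{eq:dims}---and that its inverse is precisely \cref{eq:zg}. Concretely, I would substitute \cref{eq:zg} into the definition of $A_{\rep'}$, expand the trace entrywise, and invoke the Schur orthogonality relations $\sum_{g \in G} \rep'(g)_{ab} \overline{\rep(g)_{nm}} = \frac{\abs{G}}{d_{\rep'}} \delta_{\rep,\rep'} \delta_{an} \delta_{bm}$ to collapse the resulting double sum, obtaining $A_{\rep'} = U_{\rep'}$.

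Granting this inversion identity, both directions fall out. For the forward direction, if $U$ is unitary then each $A_\rep$ is unitary, so setting $U_\rep := A_\rep$ and reading off the inverse map yields \cref{eq:zg}. For the reverse direction, if $\z_g$ is defined by \cref{eq:zg} for some unitaries $U_\rep \in \U{d_\rep}$, then $A_\rep = U_\rep$ is unitary for every $\rep$, whence $U$ is unitary by the equivalence established above. The only place any genuine computation enters is the verification of the inversion identity $A_{\rep'} = U_{\rep'}$; I expect no real obstacle there, just careful bookkeeping of the matrix indices in the orthogonality relations, so the whole argument is essentially a packaging of finite-group Fourier inversion.
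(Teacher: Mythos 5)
Your proposal is correct and follows essentially the same route as the paper: conjugate by the Fourier transform, use \cref{eq:hatL} to reduce unitarity of $\sum_{g} \z_g L_g$ to unitarity of each block $A_\rep = \sum_{g} \z_g \rep(g)$, and invert this correspondence to recover \cref{eq:zg}. The only cosmetic difference is that you verify the inversion identity $A_{\rep'} = U_{\rep'}$ via the Schur orthogonality relations for matrix coefficients, whereas the paper obtains the same formula by projecting $U$ onto the Hilbert--Schmidt-orthonormal set $\set{\hat{L}_g / \sqrt{\abs{G}} : g \in G}$ (\cref{prop:Basis}, together with the dimension count \cref{eq:dims})---the same Fourier-inversion computation in different packaging.
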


\begin{proof}
Clearly, $\sum_{g \in G} \z_g L_g$ is unitary if and only if $\sum_{g \in G} \z_g \hat{L}_g$ is unitary where
\begin{equation}
  \hat{L}_g
  = F L_g F\ct
  = \bigoplus_{\rep \in \hat{G}} \sof[\big]{\rep(g) \x I_{d_\rep}}
  \label{eq:Lg hat}
\end{equation}
according to \cref{eq:hatL}.  We prefer to work in the Fourier basis, since then all $\hat{L}_g$ become simultaneously block diagonal and we can write
\begin{equation}
  \sum_{g \in G} \z_g \hat{L}_g
  = \bigoplus_{\rep \in \hat{G}} \sof[\bigg]{
      \of[\Big]{\sum_{g \in G} \z_g \rep(g)} \x I_{d_\rep}
    }.
\end{equation}
This matrix is unitary if and only if each of its blocks is unitary, i.e.,
\begin{equation}
  \sum_{g \in G} \z_g \hat{L}_g
  = \bigoplus_{\rep \in \hat{G}} \sof[\big]{ U_\rep \x I_{d_\rep} }
  =: U
  \label{eq:U blocks}
\end{equation}
for some set of unitaries $U_\rep \in \U{d_\rep}$.  Since $\mc{B} := \set{L_g / \sqrt{\abs{G}}: g \in G}$ is an orthonormal set, see \cref{prop:Basis}, so is $\hat{\mc{B}} := \set{FBF\ct : B \in \mc{B}} = \set{\hat{L}_g / \sqrt{\abs{G}}: g \in G}$.  Since $\abs{\hat{\mc{B}}} = \abs{G} = \sum_{\rep \in \hat{G}} d_\rep^2$ according to \cref{eq:dims}, $\hat{\mc{B}}$ is in fact an orthonormal basis for the set of all block matrices that have the same block structure as $U$ in \cref{eq:U blocks}.  Thus, we can obtain the coefficients $\z_g$ in the expansion
\begin{equation}
  \frac{U}{\sqrt{\abs{G}}}
  = \sum_{g \in G} z_g \, \frac{\hat{L}_g}{\sqrt{\abs{G}}}
\end{equation}
simply by projecting on the corresponding basis vector:
\begin{equation}
  \z_g
  = \ip[\bigg]{
      \frac{\hat{L}_g}{\sqrt{\abs{G}}},
      \frac{U}{\sqrt{\abs{G}}}
    }
  = \frac{1}{\abs{G}}
    \Tr \of[\Bigg]{
      \bigoplus_{\rep \in \hat{G}}
      \sof[\Big]{ \of{\rep(g)\ct U_\rep} \x I_{d_\rep} }
    }
  = \sum_{\rep \in \hat{G}}
    \frac{d_\rep}{\abs{G}}
    \Tr \of{\rep(g)\ct U_\rep},
\end{equation}
where we substituted \cref{eq:Lg hat,eq:U blocks}.  The reverse implication follows by applying all steps in the reverse order.
\end{proof}

As a byproduct of our proof, we observe that all unitary linear combinations $\sum_{g \in G} \z_g L_g$ form a group of their own.  According to \cref{eq:U blocks}, this group is isomorphic to the direct sum of unitary groups,
\begin{equation}
  \bigoplus_{\rep \in \hat{G}} \U{d_\rep},
\end{equation}
and it contains $G$ as a subgroup (as represented by the matrices $L_g$).  Indeed, if we take any $g \in G$ and, for all irreps $\rep \in \hat{G}$, set $U_\rep := \rep(g)$ in \cref{eq:zg}, then $\z_g = 1$ while all other coefficients vanish (see \cref{prop:Basis}), reducing the linear combination to $L_g$.  Considering this, we can intuitively think of
\begin{equation}
  \set[\Big]{\sum_{g \in G} \z_g L_g \in \U{\abs{G}}: \z_g \in \C}
\end{equation}
as a natural continuous extension of the discrete finite group $G$.

\begin{corollary}[Unitary version of Cayley's theorem]\label{cor:Cayley}
Every finite group $G$ can be extended to a continuous subgroup of the unitary group $\U{\abs{G}}$.  This subgroup is isomorphic to $\bigoplus_{\rep \in \hat{G}} \U{d_\rep}$, where $\hat{G}$ is the set of all irreps of $G$ and $d_\rep$ is the dimension of irrep $\rep$.
\end{corollary}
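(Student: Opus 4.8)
The plan is to read off the statement almost entirely from \cref{thm:Unitary L} together with the observation already made in the paragraph preceding the corollary. The work has essentially been done; what remains is to package it cleanly. First I would recall that, by Cayley's theorem, $G$ embeds into the symmetric group $\S_{\abs{G}}$ via its left regular representation $g \mapsto L_g$, and that by \cref{prop:Independence of L} the matrices $\set{L_g : g \in G}$ are linearly independent (indeed they have disjoint supports). Thus the set of all linear combinations $\sum_{g \in G} \z_g L_g$ with $\z_g \in \C$ is an $\abs{G}$-dimensional complex vector space sitting inside the matrix algebra $M_{\abs{G}}(\C)$, and it contains the image of $G$ under the regular representation.

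Next I would invoke \cref{thm:Unitary L} to identify precisely the unitary elements of this space. The theorem says that $\sum_{g \in G} \z_g L_g$ is unitary exactly when the coefficients $\z_g$ are determined by \cref{eq:zg} from an arbitrary choice of one unitary $U_\rep \in \U{d_\rep}$ per irrep $\rep \in \hat{G}$. The key structural point, taken from \cref{eq:U blocks} in the proof of that theorem, is that in the Fourier basis this unitary equals $\bigoplus_{\rep \in \hat{G}} \of[\big]{U_\rep \x I_{d_\rep}}$. From this I would argue that the assignment $\of{U_\rep}_{\rep \in \hat{G}} \mapsto \sum_{g \in G} \z_g L_g$ is a group isomorphism onto its image: it is a bijection because the $U_\rep$ can be recovered from the block-diagonal form (and \cref{eq:dims} guarantees the dimensions match), and it is a homomorphism because conjugation by the Fourier matrix $F$ is an algebra homomorphism, so multiplying two such unitaries multiplies the corresponding blocks $U_\rep$ and $U_\rep'$ independently within each irrep sector. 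Hence the set of unitary linear combinations forms a subgroup of $\U{\abs{G}}$ isomorphic to $\bigoplus_{\rep \in \hat{G}} \U{d_\rep}$.

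Finally I would verify that this subgroup genuinely \emph{extends} $G$, i.e., contains the regular representation as a subgroup: setting $U_\rep := \rep(g)$ for all $\rep$ recovers $\z_g = 1$ and all other coefficients zero, as noted just before the corollary (this uses orthonormality from \cref{prop:Basis}), so $L_g$ itself lies in the subgroup for every $g$. Continuity is immediate since each factor $\U{d_\rep}$ is a continuous (connected, positive-dimensional) Lie group whenever $G$ is nontrivial, so the whole subgroup is a continuous manifold of unitaries through the discrete points $\set{L_g}$. I do not expect a genuine obstacle here: the only point requiring a little care is confirming that the correspondence is a group \emph{homomorphism} and not merely a bijection of sets, which follows from the block-multiplicativity of conjugation by $F$; everything else is direct bookkeeping on top of \cref{thm:Unitary L}.
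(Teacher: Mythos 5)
Your proposal is correct and takes essentially the same route as the paper, whose proof of this corollary is exactly the paragraph preceding it: the isomorphism is read off from \cref{eq:U blocks}, and the embedding of $G$ is verified by setting $U_\rep := \rep(g)$ so that $\z_g = 1$ and all other coefficients vanish by \cref{prop:Basis}. The details you add beyond the paper---that the correspondence $(U_\rep)_{\rep \in \hat{G}} \mapsto \sum_{g \in G} \z_g L_g$ is a group homomorphism because conjugation by $F$ multiplies blocks independently, and the explicit continuity remark---are sound elaborations of what the paper leaves implicit.
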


If we specialize \cref{thm:Unitary L} to $G = \S_n$ and apply \cref{lem:Regular rep}, we get the following result.

\begin{corollary}\label{cor:Unitary}
Let $\z_\pi \in \C$, for each $\pi \in \S_n$.  If $d \geq n$ then $\sum_{\pi \in \S_n} \z_\pi Q_\pi$ is unitary if and only if
\begin{equation}
  \z_\pi
  = \sum_{\rep \in \hat{\S}_n}
    \frac{d_\rep}{n!}
    \Tr \of[\big]{\rep(\pi)\ct U_\rep},
  \label{eq:Formula}
\end{equation}
for some choice of $U_\rep \in \U{d_\rep}$, one for each irrep $\rep$ of $\S_n$ (the reverse implication holds for any $d \geq 1$).
\end{corollary}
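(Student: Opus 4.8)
The plan is to directly combine two results already established in the excerpt: \cref{thm:Unitary L} applied to the specific group $G = \S_n$, and \cref{lem:Regular rep} which transfers the unitarity question from the tensor representation $Q_\pi$ to the left regular representation $L_\pi$. The entire corollary is essentially a specialization, so the argument is short.

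First I would invoke \cref{lem:Regular rep}: for $d \geq n$, the linear combination $\sum_{\pi \in \S_n} \z_\pi Q_\pi$ is unitary if and only if $\sum_{\pi \in \S_n} \z_\pi L_\pi$ is unitary, with the reverse implication holding for any $d \geq 1$. This reduces the $Q_\pi$ problem to the $L_\pi$ problem while faithfully preserving the dimensional caveats that reappear in the corollary's statement.

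Next I would apply \cref{thm:Unitary L} with $G = \S_n$. That theorem characterizes unitarity of $\sum_{g \in G} \z_g L_g$ by the condition $\z_g = \sum_{\rep \in \hat{G}} \frac{d_\rep}{\abs{G}} \Tr\of[\big]{\rep(g)\ct U_\rep}$ for some choice of unitaries $U_\rep \in \U{d_\rep}$. Substituting $G = \S_n$, so that $\abs{G} = n!$, $\hat{G} = \hat{\S}_n$, and relabeling the group element $g$ as the permutation $\pi$, this condition becomes exactly \cref{eq:Formula}. Chaining the two equivalences then yields the claimed characterization for $Q_\pi$ in the regime $d \geq n$, and preserves the one-directional claim for $d \geq 1$ that was already flagged in \cref{lem:Regular rep}.

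The argument has essentially no obstacle, since both ingredients are fully stated and the step is purely a matter of specialization and notation matching. The only point requiring a moment's care is tracking the two dimensional regimes: \cref{lem:Regular rep} gives a full equivalence only for $d \geq n$, while the reverse implication (that the stated formula for $\z_\pi$ forces $\sum_\pi \z_\pi Q_\pi$ to be unitary) holds for all $d \geq 1$. I would make sure the final statement correctly records that the parenthetical ``reverse implication holds for any $d \geq 1$'' is inherited directly from \cref{lem:Regular rep}, rather than from \cref{thm:Unitary L}, which concerns $L_\pi$ and carries no local-dimension parameter at all.
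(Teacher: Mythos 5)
Your proposal is correct and matches the paper's own argument exactly: the paper obtains \cref{cor:Unitary} precisely by specializing \cref{thm:Unitary L} to $G = \S_n$ (so $\abs{G} = n!$) and invoking \cref{lem:Regular rep} to pass between $Q_\pi$ and $L_\pi$, with the $d \geq 1$ caveat for the reverse direction inherited from that lemma just as you note. Nothing is missing.
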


\section{How to combine three quantum states?}

Let $Q_\pi \in \U{d^3}$ denote the matrix that permutes three qudits according to permutation $\pi \in \S_3$, see \cref{eq:Qs}.  We will now use \cref{cor:Unitary} to parametrize all tuples of complex coefficients $\of{\z_\pi : \pi \in \S_3} \in \C^6$, such that $\sum_{\pi \in \S_3} \z_\pi Q_\pi$ is a unitary matrix (for all $d \geq 1$).  First, we need to work out all irreps of $\S_3$.

\subsection{The irreducible representations of \texorpdfstring{$\S_3$}{S3}}

The symmetric group $\S_3$ has three irreducible representations~\cite{Serre}: two 1-dimensional representations (the trivial representation $\rep_1$ and the sign representation $\rep_2$) and a 2-dimensional representation $\rep_3$.  Recall that $\S_3 \cong \mathrm{D}_3$ (the dihedral group), so geometrically $\rep_3$ corresponds to rotations and reflections in 2D that preserve an equilateral triangle (centered at the origin and with one corner pointing along the $x$ axis).  These representations are written out explicitly in \cref{tab:S3 irreps}.  One can easily verify that $\rep_k(\pi \sigma) = \rep_k(\pi) \rep_k(\sigma)$ for all $\pi, \sigma \in \S_3$ and $k \in \set{1,2,3}$.  For example,
\begin{equation*}
  \rep_3 \of*{\Gr{\place{S4}{0}{0}\place{S5}{\X}{0}}}
  = \rep_3 \of*{\gr{S4}} \rep_3 \of*{\gr{S5}}
  = \mx{1&0\\0&-1} \cdot \frac{1}{2} \mx{-1 & -\sqrt{3} \\ -\sqrt{3} & 1}
  = \frac{1}{2} \mx{-1 & -\sqrt{3} \\ \sqrt{3} & -1}
  = \rep_3 \of*{\gr{S2}}.
\end{equation*}

\begin{table}
\begin{equation*}
  \newcommand{\vspc}[1]{\rule{0pt}{#1}}
  \begin{array}{c|cccccc}
           \pi  & \gr{S1} & \gr{S2} & \gr{S3} & \gr{S4} & \gr{S5} & \gr{S6} \\[10pt] \hline
    \vspc{12pt}
    \rep_1(\pi) & 1 & 1 & 1 &  1 &  1 &  1 \\
    \vspc{12pt}
    \rep_2(\pi) & 1 & 1 & 1 & -1 & -1 & -1 \\
    \rep_3(\pi) & \mx{1&0\\0&1} &
      \frac{1}{2} \mx{-1& -\sqrt{3} \\  \sqrt{3}&-1} &
      \frac{1}{2} \mx{-1&  \sqrt{3} \\ -\sqrt{3}&-1} &
                  \mx{1&0\\0&-1} &
      \frac{1}{2} \mx{-1& -\sqrt{3} \\ -\sqrt{3}& 1} &
      \frac{1}{2} \mx{-1&  \sqrt{3} \\  \sqrt{3}& 1} \\[10pt] \hline
   \vspc{20pt}
   \rep'_3(\pi) & \mx{1&0\\0&1} & \mx{\omega&0\\0&\omega^2} & \mx{\omega^2&0\\0&\omega} &
                  \mx{0&1\\1&0} & \mx{0&\omega^2\\\omega&0} & \mx{0&\omega\\\omega^2&0}
  \end{array}
\end{equation*}
\caption{\label{tab:S3 irreps}All irreducible representations of $\S_3$.  The matrix elements of $\rep_3$ are not unique but depend on the choice of basis---we provide two simple choices (irreps $\rep_3$ and $\rep'_3$ are isomorphic).}
\end{table}

\subsection{Parametrizing the general solution for \texorpdfstring{$\S_3$}{S3}} \label{sect:General3}

According to \cref{cor:Unitary}, we need to assign one unitary matrix $U_k \in \U{d_{\rep_k}}$ to each irrep $\rep_k$ of $\S_3$.  Since the global phase of $\sum_{i=1}^6 \z_i Q_i$ has no effect on the output state, we can assume without loss of generality that one of the unitaries $U_k$ is in the special unitary group.  We take $U_1, U_2 \in \U{1}$ and $U_3 \in \SU{2}$, and parametrize these unitaries as follows:
\begin{align}
  U_1 &:= \mx{e^{i \varphi_1}}, &
  U_2 &:= \mx{e^{i \varphi_2}}, &
  U_3 &:= \mx{a & c \\ -\bar{c} & \bar{a}}
  \label{eq:Us}
\end{align}
where $\varphi_1, \varphi_2 \in [0, 2\pi)$ and $a,c \in \C$ are such that $\abs{a}^2 + \abs{c}^2 = 1$.

We can now use \cref{eq:Formula} and irreps $\rep_1, \rep_2, \rep_3$ from \cref{tab:S3 irreps} to compute the coefficients $\z_1, \dotsc, \z_6$:
\begin{align}
  \z_1 &=
    \frac{1}{6} \sof[\big]{
        e^{i \varphi_1}
      + e^{i \varphi_2}
      + 4 \Re(a)
    }, \label{eq:z1} \\
  \z_2 &=
    \frac{1}{6} \sof[\big]{
        e^{i \varphi_1}
      + e^{i \varphi_2}
      - 2 \Re(a + \sqrt{3} c)
    }, \label{eq:z2} \\
  \z_3 &=
    \frac{1}{6} \sof[\big]{
        e^{i \varphi_1}
      + e^{i \varphi_2}
      - 2 \Re(a - \sqrt{3} c)
    }, \label{eq:z3} \\
  \z_4 &=
    \frac{1}{6} \sof[\big]{
        e^{i \varphi_1}
      - e^{i \varphi_2}
      + 4i \Im(a)
    }, \label{eq:z4} \\
  \z_5 &=
    \frac{1}{6} \sof[\big]{
        e^{i \varphi_1}
      - e^{i \varphi_2}
      - 2i \Im(a + \sqrt{3} c)
    }, \label{eq:z5} \\
  \z_6 &=
    \frac{1}{6} \sof[\big]{
        e^{i \varphi_1}
      - e^{i \varphi_2}
      - 2i \Im(a - \sqrt{3} c)
    }. \label{eq:z6}
\end{align}
Up to an overall global phase, this parametrizes precisely the set of coefficients for which the following matrix, cf. \cref{eq:Ls}, is unitary:
\begin{equation}
  \sum_{i=1}^6 \z_i L_i =
  \mx{
    \z_1 & \z_3 & \z_2 & \z_4 & \z_5 & \z_6 \\
    \z_2 & \z_1 & \z_3 & \z_6 & \z_4 & \z_5 \\
    \z_3 & \z_2 & \z_1 & \z_5 & \z_6 & \z_4 \\
    \z_4 & \z_6 & \z_5 & \z_1 & \z_2 & \z_3 \\
    \z_5 & \z_4 & \z_6 & \z_3 & \z_1 & \z_2 \\
    \z_6 & \z_5 & \z_4 & \z_2 & \z_3 & \z_1
  }.
  \label{eq:zL}
\end{equation}

\begin{remark}
If we insist, in addition to \cref{eq:z1,eq:z2,eq:z3,eq:z4,eq:z5,eq:z6}, that $\abs{z_1} = \dotsb = \abs{z_6} = 1/\sqrt{6}$, we get a $6 \times 6$ \emph{flat} unitary---this is also known as a \emph{complex Hadamard matrix}~\cite{TZ06}.  Such matrices are relevant to the MUB problem in six dimensions~\cite{BBE07}.  Using a computer, one can find a total of 72 discrete solutions $(z_1, \dotsc, z_6)$ under the flatness constraint.  Unfortunately, the corresponding unitaries appear to be equivalent to the $6 \times 6$ Fourier matrix.  Nevertheless, this method can in principle be used to find flat unitaries of size $\abs{G} \times \abs{G}$, for any finite group $G$.  It would be interesting to know whether this construction can yield anything beyond what is already known~\cite{TZ06}.
\end{remark}

\subsection{Imposing the independence constraints} \label{sect:Nice}

Recall that \cref{eq:Magic} involves terms with coefficients $\Re(\z_1 \bar{\z}_4)$, $\Re(\z_2 \bar{\z}_5)$, and $\Re(\z_3 \bar{\z}_6)$.  We would like to understand when these terms vanish (see \Q2 in \cref{sect:Contraction}), since this would ensure that the coefficients are independent of the states.

\begin{proposition}\label{prop:Res}
$\Re(\z_1 \bar{\z}_4) = \Re(\z_2 \bar{\z}_5) = \Re(\z_3 \bar{\z}_6) = 0$ if $\varphi_1 = - \varphi_2$.
\end{proposition}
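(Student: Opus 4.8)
The plan is to substitute the constraint $\varphi_1 = -\varphi_2$ directly into the explicit formulas \cref{eq:z1,eq:z2,eq:z3,eq:z4,eq:z5,eq:z6} and exploit a parity structure that emerges: the coefficients $\z_1, \z_2, \z_3$ become purely real while $\z_4, \z_5, \z_6$ become purely imaginary. Granting this, each of the three products $\z_1 \bar{\z}_4$, $\z_2 \bar{\z}_5$, $\z_3 \bar{\z}_6$ is a product of a real number with a purely imaginary one, hence itself purely imaginary, so its real part vanishes.

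The first step is to record the two elementary phase identities that do all the work. Writing $\varphi := \varphi_1 = -\varphi_2$, we have
\begin{equation}
  e^{i \varphi_1} + e^{i \varphi_2} = 2 \cos \varphi \in \R,
  \qquad
  e^{i \varphi_1} - e^{i \varphi_2} = 2 i \sin \varphi \in i \R.
\end{equation}
Next I would inspect the six formulas in turn. The sum $e^{i \varphi_1} + e^{i \varphi_2}$ is the only phase term appearing in $\z_1, \z_2, \z_3$, and it is now real; since the remaining correction terms $4 \Re(a)$, $-2 \Re(a + \sqrt{3} c)$, $-2 \Re(a - \sqrt{3} c)$ are manifestly real, each of $\z_1, \z_2, \z_3$ is real. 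Symmetrically, the difference $e^{i \varphi_1} - e^{i \varphi_2}$ is the only phase term appearing in $\z_4, \z_5, \z_6$, and it is now purely imaginary; since the correction terms $4 i \Im(a)$, $-2 i \Im(a + \sqrt{3} c)$, $-2 i \Im(a - \sqrt{3} c)$ are purely imaginary, each of $\z_4, \z_5, \z_6$ is purely imaginary.

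To conclude, I would observe that if $\z_j \in \R$ and $\z_k \in i \R$ then $\bar{\z}_k \in i \R$, whence $\z_j \bar{\z}_k \in i \R$ and $\Re(\z_j \bar{\z}_k) = 0$; applying this to the pairs $(1,4)$, $(2,5)$, $(3,6)$ yields the three claimed identities. There is no genuine obstacle here—the argument is a direct substitution—so the only thing to watch is the bookkeeping of which correction terms carry a factor of $i$, a pattern that the $+$ versus $-$ structure of the phase factors in \cref{eq:z1,eq:z2,eq:z3} versus \cref{eq:z4,eq:z5,eq:z6} makes transparent.
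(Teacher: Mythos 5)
Your proof is correct, but it takes a different (and arguably cleaner) route than the paper's. The paper proves \cref{prop:Res} by noting that each $\z_i$ has the generic form $u = e^{i\varphi_1} + e^{i\varphi_2} + r\cos\alpha$ or $v = e^{i\varphi_1} - e^{i\varphi_2} + ir\sin\alpha$, expanding $\Re(u\bar v)$ trigonometrically for \emph{arbitrary} $\varphi_1, \varphi_2$, and simplifying to $\Re(u\bar v) = r\bigl(\cos(\alpha - \varphi_1) - \cos(\alpha + \varphi_2)\bigr)$, which vanishes for all $r,\alpha$ once $\varphi_1 = -\varphi_2$ is imposed. You instead substitute the hypothesis first, observe that $e^{i\varphi_1} + e^{i\varphi_2} = 2\cos\varphi \in \R$ and $e^{i\varphi_1} - e^{i\varphi_2} = 2i\sin\varphi \in i\R$, and conclude that $\z_1,\z_2,\z_3$ become real while $\z_4,\z_5,\z_6$ become purely imaginary, so each product $\z_j\bar\z_k$ in question is purely imaginary. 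Your parity argument is shorter and isolates the structural reason the proposition holds; indeed it is exactly the real/imaginary dichotomy the paper records immediately afterwards in \cref{z1,z2,z3,z4,z5,z6} and takes as the starting ansatz in \cref{apx:From scratch} (see \cref{eq:ab}), so your proof in effect front-loads that observation. What the paper's longer computation buys is the explicit intermediate expression $r\bigl(\cos(\alpha - \varphi_1) - \cos(\alpha + \varphi_2)\bigr)$, which exposes how $\Re(\z_j\bar\z_k)$ depends on the phases in general and hence what other choices of $(\varphi_1,\varphi_2)$ could make individual terms vanish; your argument establishes the stated sufficiency just as rigorously but does not yield that extra information. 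Both proofs are complete for the statement as given, which is only an implication in one direction.
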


\begin{proof}
Note from \cref{eq:z1,eq:z2,eq:z3,eq:z4,eq:z5,eq:z6} that (up to an overall constant of $1/6$) each coefficient $\z_i$ has one of the following two forms:
\begin{align}
  u &:= e^{i \varphi_1}
      + e^{i \varphi_2}
      + r \cos \alpha, \\
  v &:= e^{i \varphi_1}
      - e^{i \varphi_2}
      + ir \sin \alpha
\end{align}
for some $\varphi_1, \varphi_2, \alpha \in [0, 2\pi)$ and $r \geq 0$.  Furthermore, each $u$-type coefficient is paired up with a corresponding $v$-type coefficient.  A straightforward calculation gives
\begin{align}
  \Re(u \bar{v})
  ={}& \Re(u) \Re(v) + \Im(u) \Im(v) \\
  ={}& \of{\cos \varphi_1 + \cos \varphi_2 + r \cos \alpha}
       \of{\cos \varphi_1 - \cos \varphi_2} \\
    &+ \of{\sin \varphi_1 + \sin \varphi_2}
       \of{\sin \varphi_1 - \sin \varphi_2 + r \sin \alpha} \nonumber \\
  ={}& (\cos \varphi_1)^2 - (\cos \varphi_2)^2
     + \of{\cos \varphi_1 - \cos \varphi_2} r \cos \alpha \\
    &+ (\sin \varphi_1)^2 - (\sin \varphi_2)^2
     + \of{\sin \varphi_1 + \sin \varphi_2} r \sin \alpha \nonumber \\
  ={}& \of{\cos \varphi_1 - \cos \varphi_2} r \cos \alpha
     + \of{\sin \varphi_1 + \sin \varphi_2} r \sin \alpha \\
  ={}& r \of[\big]{ \cos(\alpha - \varphi_1) - \cos(\alpha + \varphi_2)}.
\end{align}
We can guarantee that $\Re(u \bar{v}) = 0$ irrespectively of the values of $r$ and $\alpha$ by choosing $-\varphi_1 = \varphi_2$.  This makes all three terms vanish simultaneously.
\end{proof}

Following \cref{prop:Res}, we define $\varphi := \varphi_1 = -\varphi_2$.  Then \cref{eq:z1,eq:z2,eq:z3,eq:z4,eq:z5,eq:z6} become:
\begin{align}
  \z_1 &= \frac{1}{3} \sof[\big]{ \cos \varphi + 2 \Re(a) },              \label{z1} \\
  \z_2 &= \frac{1}{3} \sof[\big]{ \cos \varphi -   \Re(a + \sqrt{3} c) }, \label{z2} \\
  \z_3 &= \frac{1}{3} \sof[\big]{ \cos \varphi -   \Re(a - \sqrt{3} c) }, \label{z3} \\
  \z_4 &= \frac{i}{3} \sof[\big]{ \sin \varphi + 2 \Im(a) },              \label{z4} \\
  \z_5 &= \frac{i}{3} \sof[\big]{ \sin \varphi -   \Im(a + \sqrt{3} c) }, \label{z5} \\
  \z_6 &= \frac{i}{3} \sof[\big]{ \sin \varphi -   \Im(a - \sqrt{3} c) }. \label{z6}
\end{align}
Here we can choose any $\varphi \in [0,2\pi)$ and $a, c \in \C$ such that $\abs{a}^2 + \abs{c}^2 = 1$.  The output state is then obtained by substituting \cref{z1,z2,z3,z4,z5,z6} in \cref{eq:Magic}.

Unfortunately, the parametrization in \cref{z1,z2,z3,z4,z5,z6} is somewhat cumbersome.  In addition, it has another, more serious drawback: it appears as if there are four degrees of freedom---one from $\varphi$ and three from $a$ and $c$.  While it is not obvious from \cref{z1,z2,z3,z4,z5,z6}, one of these degrees of freedom is redundant, since it has no affect on the output state.

\subsection{Alternative parametrizations} \label{sect:Parametrizations}

Due to the shortcomings just discussed, in this section we derive two alternative parametrizations that are much simpler and more insightful.  Our derivation is based on \cref{z1,z2,z3,z4,z5,z6}.  With an educated guess, however, the same alternative parametrizations can also be derived from scratch without invoking the irreps of $\S_3$ at all (see \cref{apx:From scratch}).

\subsubsection{Parametrization by \texorpdfstring{$\C^3$}{C3}} \label{sect:C3}

It is natural to pair up the coefficients $\z_i$ as follows:
\begin{align}
  q_1 &:= \z_1 + \z_4 = \frac{1}{3} \of{ e^{i\varphi} + 2a},             \nonumber \\
  q_2 &:= \z_2 + \z_5 = \frac{1}{3} \of{ e^{i\varphi} - a - \sqrt{3}c }, \label{eq:q123} \\
  q_3 &:= \z_3 + \z_6 = \frac{1}{3} \of{ e^{i\varphi} - a + \sqrt{3}c }. \nonumber
\end{align}
The coefficients $\z_i$ in terms of the new parameters $q_1, q_2, q_3 \in \C$ are expressed as follows:
\begin{equation}
  (\z_1, \z_2, \z_3, \z_4, \z_5, \z_6) :=
  (\Re q_1,  \Re q_2,  \Re q_3,
  i\Im q_1, i\Im q_2, i\Im q_3).
  \label{eq:zqs}
\end{equation}
With this parametrization, the output state $\rho$ from \cref{eq:Magic} looks as follows:
\begin{falign}
  \rho
  &= \abs{q_1}^2 \rho_1
   + \abs{q_2}^2 \rho_2 \label{q:1}
   + \abs{q_3}^2 \rho_3 \\[0.2cm]
  &+ \Im(q_1 \bar{q}_2) \, i [\rho_1, \rho_2] \nonumber \\
  &+ \Im(q_2 \bar{q}_3) \, i [\rho_2, \rho_3] \label{q:2}\\
  &+ \Im(q_3 \bar{q}_1) \, i [\rho_3, \rho_1] \nonumber \\[0.2cm]
  &+ \Re(q_1 \bar{q}_2) \of{\rho_2 \rho_3 \rho_1 + \rho_1 \rho_3 \rho_2} \nonumber \\
  &+ \Re(q_2 \bar{q}_3) \of{\rho_3 \rho_1 \rho_2 + \rho_2 \rho_1 \rho_3} \label{q:3} \\
  &+ \Re(q_3 \bar{q}_1) \of{\rho_1 \rho_2 \rho_3 + \rho_3 \rho_2 \rho_1} \nonumber
\end{falign}
where $q_1, q_2, q_3 \in \C$ are subject to the following constraints:
\begin{align}
  \abs{q_1}^2 + \abs{q_2}^2 + \abs{q_3}^2 &= 1, &
  \abs{q_1 + q_2 + q_3}^2 &= 1.
  \label{eq:qs}
\end{align}

To derive these constraints, we solve \cref{eq:q123} for the original parameters:
\begin{align}
  e^{i\varphi} &= q_1 + q_2 + q_3, &
  a &= q_1 - \frac{1}{2} (q_2 + q_3), &
  c &= \frac{\sqrt{3}}{2} (q_3 - q_2).
\end{align}
From the first equation we immediately get the second constraint in \cref{eq:qs}.  From the next two equations we get:
\begin{align}
  \abs{a}^2
   &= \abs{q_1}^2 + \frac{1}{4} \of[\big]{ \abs{q_2}^2 + \abs{q_3}^2 }
    - \Re \of{ q_1 \bar{q}_2 }
    - \Re \of{ q_3 \bar{q}_1 }
    + \frac{1}{2} \Re \of{ q_2 \bar{q}_3 }, \\
  \abs{c}^2
   &= \frac{3}{4} \of[\big]{ \abs{q_2}^2 + \abs{q_3}^2 }
    - \frac{3}{2} \Re \of{ q_2 \bar{q}_3 }.
\end{align}
Adding these together gives:
\begin{equation}
  1 = \abs{a}^2 + \abs{c}^2
    = \abs{q_1}^2 + \abs{q_2}^2 + \abs{q_3}^2
    - \Re \of{ q_1 \bar{q}_2 }
    - \Re \of{ q_2 \bar{q}_3 }
    - \Re \of{ q_3 \bar{q}_1 }.
  \label{eq:-Re}
\end{equation}
However, we already know from the second constraint in \cref{eq:qs} that
\begin{equation}
  1 = \abs{q_1 + q_2 + q_3}^2
    = \abs{q_1}^2 + \abs{q_2}^2 + \abs{q_3}^2
    + 2 \Re \of{q_1 \bar{q}_2}
    + 2 \Re \of{q_2 \bar{q}_3}
    + 2 \Re \of{q_3 \bar{q}_1}.
  \label{eq:2Re}
\end{equation}
Comparing \cref{eq:2Re,eq:-Re} we conclude that
\begin{equation}
  \Re \of{ q_1 \bar{q}_2 } +
  \Re \of{ q_2 \bar{q}_3 } +
  \Re \of{ q_3 \bar{q}_1 } = 0.
  \label{eq:qRes}
\end{equation}
Consequently, constraints~\eqref{eq:2Re} and~\eqref{eq:-Re} are equivalent to \cref{eq:qs}.  These constraints are also equivalent to unitarity of the three matrices in \cref{eq:Us}, assuming $\varphi_1 = -\varphi_2 =: \varphi$.  Indeed, if we substitute \cref{eq:q123} in \cref{eq:qs}, we recover $\abs{e^{i \varphi}}^2 = 1$ and $\abs{a}^2 + \abs{c}^2 = 1$.

Another way of writing constraints \eqref{eq:qs} is as follows.  If we let
\begin{align}
  \ket{q} &:= \mx{q_1 \\ q_2 \\ q_3}, &
  \ket{u} &:= \frac{1}{\sqrt{3}} \mx{1 \\ 1 \\ 1}
\end{align}
then \cref{eq:qs} is equivalent to
\begin{align}
  \abs{\braket{q}{q}}^2 &= 1, &
  \abs{\braket{q}{u}}^2 &= 1/3,
  \label{eq:Overlaps}
\end{align}
i.e., $\ket{q} \in \C^3$ is a unit vector that is \emph{mutually unbiased} to the uniform superposition $\ket{u}$.

One advantage of this parametrization is that it makes it more apparent that neither the constraints~\eqref{eq:Overlaps} nor the output state $\rho$ in \cref{q:1,q:2,q:3} depend on the global phase of $\ket{q}$; this feature was not obvious at all from the original parametrization in \cref{z1,z2,z3,z4,z5,z6}.  In fact, we can always adjust the global phase of $\ket{q}$ so that the $\abs{q_1 + q_2 + q_3}^2 = 1$ constraint from \cref{eq:qs} turns into $q_1 + q_2 + q_3 = 1$.  A simplified version of \cref{eq:qs} is then
\begin{align}
  \abs{q_1}^2 + \abs{q_2}^2 + \abs{q_3}^2 &= 1, &
  q_1 + q_2 + q_3 &= 1,
  \label{eq:qs'}
\end{align}
which makes it very clear that $\ket{q}$ has only three relevant degrees of freedom.  We will explore these constraints further in \cref{sect:Linkage} and relate them to the so-called four-bar linkage mechanism.

We can make some further observations:
\begin{itemize}
  \item Due to the first constraint in \cref{eq:qs'}, the first-order terms in \cref{q:1} form a convex combination of $\rho_1,\rho_2,\rho_3$.  This is analogous to \cref{eq:box} for $n = 2$.
  \item If $\ket{q}$ satisfies \cref{eq:qs'} then so does its complex conjugate $\ket{q}^*$.  Complex conjugation of $q_i$'s preserves \cref{q:1,q:3} but flips the signs of the second-order terms in \cref{q:2}.
  \item Because of \cref{eq:qRes}, the coefficients of the third-order terms in \cref{q:3} sum to zero.  Hence, the third-order terms themselves can be expressed as a linear combination of double commutators (see \cref{sect:Double commutators} for more details).
  \item If the input states $\rho_i$ and the coefficients $q_i$ are permuted according to the same permutation, the output state in \cref{q:1,q:2,q:3} remains invariant.  For cyclic permutations, this is evident form symmetry; it is straightforward to check for transpositions.
\end{itemize}

\subsubsection{Parametrization by a probability distribution and phases} \label{sect:pd}

As we just noted, \cref{eq:qs'} implies that $(\abs{q_1}^2, \abs{q_2}^2, \abs{q_3}^2)$ is a probability distribution.  We can highlight this by setting
\begin{equation}
  q_k := e^{i \phi_k} \sqrt{p_k},
\end{equation}
for some probability distribution $(p_1, p_2, p_3)$ and some phases $\phi_1, \phi_2, \phi_3 \in [0, 2\pi)$.  Note that
\begin{align}
  \Re(q_i \bar{q}_j) &= \sqrt{p_i p_j} \cos (\phi_i - \phi_j), \\
  \Im(q_i \bar{q}_j) &= \sqrt{p_i p_j} \sin (\phi_i - \phi_j).
\end{align}
If we further denote the differences between consecutive phases by
\begin{align}
  \delta_{12} &:= \phi_1 - \phi_2, \label{eq:d12} \\
  \delta_{23} &:= \phi_2 - \phi_3, \label{eq:d23} \\
  \delta_{31} &:= \phi_3 - \phi_1, \label{eq:d31}
\end{align}
we can rewrite \cref{q:1,q:2,q:3} as follows:
\begin{falign}
  \rho
  &= p_1 \rho_1
   + p_2 \rho_2 \label{p:1}
   + p_3 \rho_3 \\[0.2cm]
  &+ \sqrt{p_1 p_2} \, \sin \delta_{12} \, i [\rho_1, \rho_2] \nonumber \\
  &+ \sqrt{p_2 p_3} \, \sin \delta_{23} \, i [\rho_2, \rho_3] \label{p:2} \\
  &+ \sqrt{p_3 p_1} \, \sin \delta_{31} \, i [\rho_3, \rho_1] \nonumber \\[0.2cm]
  &+ \sqrt{p_1 p_2} \, \cos \delta_{12} \, \of{\rho_2 \rho_3 \rho_1 + \rho_1 \rho_3 \rho_2} \nonumber \\
  &+ \sqrt{p_2 p_3} \, \cos \delta_{23} \, \of{\rho_3 \rho_1 \rho_2 + \rho_2 \rho_1 \rho_3} \label{p:3} \\
  &+ \sqrt{p_3 p_1} \, \cos \delta_{31} \, \of{\rho_1 \rho_2 \rho_3 + \rho_3 \rho_2 \rho_1}.\nonumber
\end{falign}

Parameters $p_i$ and $\delta_{ij}$ are subject to the following constraints: $(p_1, p_2, p_3)$ is a probability distribution (i.e., $p_i \geq 0$ and $p_1 + p_2 + p_3 = 1$) and the angles $\delta_{ij}$ satisfy
\begin{align}
  \delta_{12} + \delta_{23} + \delta_{31} &= 0, &
  \sqrt{p_1 p_2} \, \cos \delta_{12}
+ \sqrt{p_2 p_3} \, \cos \delta_{23}
+ \sqrt{p_3 p_1} \, \cos \delta_{31} &= 0.
  \label{eq:deltas}
\end{align}
The first constraint is apparent from \cref{eq:d12,eq:d23,eq:d31}, while the second constraint is equivalent to \cref{eq:qRes}.  In total, there are three degrees of freedom: two for the distribution $(p_1,p_2,p_3)$ and one for the angles $(\delta_{12}, \delta_{23}, \delta_{31})$.  If the distribution $(p_1,p_2,p_3)$ is fixed (and not deterministic), the constraints in \cref{eq:deltas} yield a one-parameter family of angles $(\delta_{12}, \delta_{23}, \delta_{31})$.  This is qualitatively different from the $n = 2$ case where the coefficients of the first-order terms are completely determined by $\lambda$, see \cref{eq:box}.  Once the parameter $\lambda$ is fixed, only a discrete degree of freedom remains corresponding to the sign in front of the commutator, see \cref{sect:General U}.

\section{Further observations}

In this section we highlight some further features of the operation that combines three states.  In particular, we show that the third-order terms can be expressed as a linear combination of double commutators.  We also show that the twelve nested compositions discussed in \cref{sect:12 ways} are special cases of the general operation.

\subsection{Double commutators} \label{sect:Double commutators}

Let us elaborate more on the meaning of the constraint \eqref{eq:qRes}.  As mentioned earlier, it has to do with \emph{double commutators}, i.e., expressions of the form $[1,[2,3]]$.\footnote{We write $i$ instead of $\rho_i$ for brevity.}  In what follows, we do not specify the states $\rho_i$ but rather treat them as abstract non-commutative variables.  With this convention, for example, $\rho_1$ and $\rho_2$ are always considered to be linearly independent.

Note that there are $6$ ways of ordering three states and $2$ ways of putting brackets, so there are twelve double commutators in total.  However, many of them are identical (such as $[1,[2,3]]$ and $[[3,2],1]$) or differ only by a sign (such as $[1,[2,3]]$ and $[1,[3,2]]$).  Furthermore, we know from the \emph{Jacobi identity} that
\begin{equation}
  [1, [2, 3]]
+ [2, [3, 1]]
+ [3, [1, 2]] = 0.
  \label{eq:Jacobi}
\end{equation}
This leaves us with only two linearly independent double commutators.  Somewhat arbitrarily, we can choose them as $[1,[2,3]]$ and $[[1,2],3]$.  After expanding both of them, we get the following coefficients in front of the six different products of the states $1$, $2$, $3$:
\begin{equation}
\begin{array}{r|r|r|r|r|r|r}
                                                      & 123 & 132 & 213 & 231 & 312 & 321 \\ \hline
  \sof{1,\sof{2,3}}                                   &  1  & -1  &  0  & -1  &  0  &  1  \\
  \sof{\sof{1,2},3}                                   &  1  &  0  & -1  &  0  & -1  &  1  \\
  x \, i\sof{1,i\sof{2,3}} + y \, i\sof{i\sof{1,2},3} &  z  &  x  &  y  &  x  &  y  &  z
\end{array}
\label{eq:table}
\end{equation}
where the last row is a linear combination of the first two rows and $z := - x - y$.  In other words,
\begin{equation}
    x (231 + 132)
  + y (312 + 213)
  + z (123 + 321)
  = x \, i\sof{1,i\sof{2,3}} + y \, i\sof{i\sof{1,2},3}
  \label{eq:xyz}
\end{equation}
whenever $x + y + z = 0$.

We can use this to simply the third-order terms in \cref{q:3,p:3}.
Indeed, since the coefficients in~\eqref{q:3} sum to zero, see \cref{eq:qRes}, we can rewrite~\eqref{q:3} as a linear combination of two double commutators:
\begin{falign}
  &+ \Re(q_1 \bar{q}_2) \, i\sof{\rho_1,i\sof{\rho_2,\rho_3}} \\[3pt]
  &+ \Re(q_2 \bar{q}_3) \, i\sof{i\sof{\rho_1,\rho_2},\rho_3}.\nonumber
\end{falign}
Since expression~\eqref{p:3} is symmetric, one can also use the above expression with all subscripts cyclically shifted by one or by two (i.e., according to $1 \to 2 \to 3 \to 1$ or its inverse).  Alternatively, if we use the parametrization in terms of $p_i$ and $\delta_{ij}$ from \cref{sect:pd}, we can restate~\eqref{p:3} as
\begin{falign}
 &+ \sqrt{p_1 p_2} \, \cos \delta_{12} \, i\sof{\rho_1,i\sof{\rho_2,\rho_3}} \label{eq:p order 3} \\[3pt]
 &+ \sqrt{p_2 p_3} \, \cos \delta_{23} \, i\sof{i\sof{\rho_1,\rho_2},\rho_3}.\nonumber
\end{falign}
Again, because of the second condition in \cref{eq:deltas}, this expression is invariant under cyclic shifts.

\subsection{Relation to nested expressions}

It is interesting to know whether the ternary operation can reproduce the 12 nested expressions discussed in \cref{sect:12 ways} as special cases.  More precisely, we would like to know whether, for fixed distribution $(p_1, p_2, p_3)$, the one-parameter family of states described by \cref{p:1,p:2,p:3}, under constraints~\eqref{eq:deltas}, contains all 12 nested expressions with first-order terms $p_1 \rho_1 + p_2 \rho_2 + p_3 \rho_3$.

For example, consider nested expressions of the form $\rho_1 \bs_a \of{\rho_2 \bs_{a'} \rho_3}$ where each~$\bs$ is either~$\bp$ or~$\bm$, see \cref{eq:+,eq:-}.  Recall from \cref{eq:d1,eq:d2} that
\begin{align}
  \rho_1 \bs_a \of{\rho_2 \bs_{a'} \rho_3}
    =          a \rho_1 + (1-a) &  \of[\Big]{a' \rho_2 + (1-a') \rho_3 + (-1)^{s'} \sqrt{a'(1-a')} \, i[\rho_3, \rho_2]}             \label{eq:l1} \\
    + (-1)^s \sqrt{a(1-a)} \, i & \sof[\Big]{a' \rho_2 + (1-a') \rho_3 + (-1)^{s'} \sqrt{a'(1-a')} \, i[\rho_3, \rho_2], \; \rho_1}, \label{eq:l2}
\end{align}
where $s,s' \in \set{0,1}$ depend on the signs of the two~$\bs$ operations ($0$ stands for~$\bp$ while~$1$ stands for~$\bm$).  Importantly, this expression has only one double commutator, namely $i\sof{\rho_1,i\sof{\rho_2,\rho_3}}$, while a general expression involves two, see~\cref{eq:p order 3}.

To get only one double commutator, we can set $\cos \delta_{ij} = 0$ for some $ij \in \set{12,23,31}$.  However, we must also be able to solve \cref{eq:deltas} for the remaining two angles.  For example, to get only $i\sof{\rho_1,i\sof{\rho_2,\rho_3}}$, we can set $\cos \delta_{23} = 0$ and solve \cref{eq:deltas} for $\delta_{12}$ and $\delta_{31}$.  Similarly, $\cos \delta_{31} = 0$ would yield $i[\rho_2,i[\rho_3,\rho_1]]$ via \cref{eq:deltas} and the Jacoby identity.  The following lemma establishes that we can indeed get all 12 nested expressions in this way.  Recall that throughout this section we treat $\rho_i$ as abstract non-commutative variables.

\begin{lemma}\label{lem:Nested}
Let $\rho$ be given by \cref{p:1,p:2,p:3} and assume $p_1,p_2,p_3 \neq 0$.  Then $\rho$ admits a nested expression if and only if $\cos \delta_{ij} = 0$ for some $ij \in \set{12,23,31}$.
\end{lemma}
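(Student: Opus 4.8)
The plan is to reduce the entire comparison to the two-dimensional space of double commutators, where a nested expression differs structurally from a general one. Recall from \cref{eq:p order 3} that the third-order part of $\rho$ equals $\sqrt{p_1p_2}\cos\delta_{12}\,D_1 + \sqrt{p_2p_3}\cos\delta_{23}\,D_3$, where I abbreviate $D_1 := i\sof{\rho_1,i\sof{\rho_2,\rho_3}}$ and $D_3 := i\sof{i\sof{\rho_1,\rho_2},\rho_3}$; these two double commutators form a basis by the Jacobi identity \cref{eq:Jacobi}, so the third-order part of $\rho$ has coordinates $\of{\sqrt{p_1p_2}\cos\delta_{12},\,\sqrt{p_2p_3}\cos\delta_{23}}$ in $\set{D_1,D_3}$. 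Expanding a nested expression via \cref{eq:l1,eq:l2} shows that each of the three nesting shapes $\rho_1\bs_a\of{\rho_2\bs_{a'}\rho_3}$, $\rho_2\bs_b\of{\rho_3\bs_{b'}\rho_1}$, $\rho_3\bs_c\of{\rho_1\bs_{c'}\rho_2}$ contributes only a single double commutator, and rewriting these in the basis $\set{D_1,D_3}$ places them, up to sign and magnitude, along the three fixed directions $(1,0)$, $(-1,1)$, $(0,-1)$, respectively. The two sign choices in each $\bs$ only flip the overall sign, so the nesting shape alone selects the direction.

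For the necessary direction, suppose $\rho$ coincides with some nested expression. Matching the first-order part $p_1\rho_1+p_2\rho_2+p_3\rho_3$ determines the continuous parameters---for the first shape, $a=p_1$ and $a'=p_2/(p_2+p_3)$---and since $p_1,p_2,p_3\neq 0$ these lie strictly inside $(0,1)$, so the single double commutator has nonzero coefficient. Hence the third-order part of $\rho$ is a nonzero multiple of one of the three direction vectors. Being proportional to $(1,0)$ forces $\cos\delta_{23}=0$; to $(0,-1)$ forces $\cos\delta_{12}=0$; and to $(-1,1)$ forces $\sqrt{p_1p_2}\cos\delta_{12}+\sqrt{p_2p_3}\cos\delta_{23}=0$, which by the second constraint in \cref{eq:deltas} is equivalent to $\cos\delta_{31}=0$. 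In each case $\cos\delta_{ij}=0$ for some $ij$.

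For the sufficient direction, assume without loss of generality $\cos\delta_{23}=0$ and try to realize $\rho$ by the first shape. I would set $a:=p_1$ and $a':=p_2/(p_2+p_3)$ to match first order; this simultaneously aligns the double-commutator directions, since $\cos\delta_{23}=0$ annihilates the $D_3$-component of $\rho$. The key step is that \cref{eq:deltas} together with $\cos\delta_{23}=0$ is rigid enough to fix all remaining magnitudes: using $\sqrt{p_2}\cos\delta_{12}+\sqrt{p_3}\cos\delta_{31}=0$ (the second constraint, divided by $\sqrt{p_1}$) and $\delta_{23}=-(\delta_{12}+\delta_{31})$ with $\cos\delta_{23}=0$, a short computation gives $\cos^2\delta_{12}=p_3/(p_2+p_3)$, whence $\sin^2\delta_{12}=p_2/(p_2+p_3)$ and $\sin^2\delta_{31}=p_3/(p_2+p_3)$. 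These are exactly the magnitudes of the commutator and double-commutator coefficients produced by the nested expansion, so every coefficient of $\rho$ already matches up to a sign.

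The remaining task---and the part I expect to be the main obstacle---is the discrete sign bookkeeping: the two bits $s,s'\in\set{0,1}$ recording which $\bs$ is $\bp$ versus $\bm$ must satisfy four sign conditions (one per commutator plus one for the double commutator) using only two degrees of freedom. Two of these fix the parities of $s'$ and $s$ from $\sin\delta_{23}=\pm1$ and $\text{sgn}(\sin\delta_{12})$; the other two are then forced, and their mutual consistency reduces to the two sign identities $\sin\delta_{12}\sin\delta_{31}<0$ and $\sin\delta_{12}\cos\delta_{12}\sin\delta_{23}>0$. Both follow from the relations above: the first from $\sin\delta_{12}\sin\delta_{31}=-\sqrt{p_2p_3}/(p_2+p_3)$, and the second by expanding $\sin\delta_{23}=-\sin\of{\delta_{12}+\delta_{31}}$ and substituting $\cos\delta_{31}=-\sqrt{p_2/p_3}\cos\delta_{12}$, which turns the product into a sum of two manifestly positive terms. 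With the signs fixed, the chosen $a,a',s,s'$ yield a nested expression equal to $\rho$; the cases $\cos\delta_{12}=0$ and $\cos\delta_{31}=0$ then follow by the cyclic symmetry of \cref{p:1,p:2,p:3}.
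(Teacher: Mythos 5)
Your proof is correct and takes essentially the same route as the paper: the forward implication rests on the two-dimensional double-commutator space from \cref{eq:p order 3} together with \cref{eq:deltas} and the Jacobi identity, and your reverse direction solves \cref{eq:deltas} under $\cos\delta_{23}=0$ with $a=p_1$, $a'=p_2/(p_2+p_3)$, recovering exactly the paper's four discrete solutions \cref{d23,d31,d12} labelled by $(s,s')$. The only organizational difference is that you verify equality with \cref{eq:123} by matching coefficients monomial-by-monomial and checking two sign-consistency inequalities, whereas the paper substitutes the four solutions after regrouping \cref{pp:1,pp:2,pp:3} into the nested form directly.
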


\begin{proof}
We have already proved the forward implication.  Indeed, nested expressions have only one double commutator, so we get $\cos \delta_{ij} = 0$ for some $ij \in \set{12,23,31}$.

For the reverse implication, we assume $\cos \delta_{23} = 0$ (the other cases follow similarly due to symmetry).  Under this assumption, let us argue that \cref{eq:deltas} has only four discrete solutions, and that these solutions correspond to the four nested expressions
\begin{equation}
  \rho_1 \bs_{p_1} \of[\Big]{ \rho_2 \bs_{\frac{p_2}{p_2 + p_3}} \rho_3 }
  \label{eq:123}
\end{equation}
with arbitrary signs $s,s' \in \set{0,1}$ for the two $\bs$ operations.

First, we can write
\begin{equation}
  \cos \delta_{23} = 0 \qquad\text{and}\qquad
  \sin \delta_{23} = -(-1)^{s'}
  \label{d23}
\end{equation}
for some $s' \in \set{0,1}$.  In other words, $\delta_{23} := -(-1)^{s'} \pi/2$.  From the first half of \cref{eq:deltas}, $\delta_{12} = - \delta_{23} - \delta_{31} = (-1)^{s'} \pi / 2 - \delta_{31}$.  Hence
\begin{equation}
  \cos \delta_{12} = (-1)^{s'} \sin \delta_{31} \qquad\text{and}\qquad
  \sin \delta_{12} = (-1)^{s'} \cos \delta_{31}.
  \label{eq:cs}
\end{equation}
From the second half of \cref{eq:deltas}, $\sqrt{p_2} \cos \delta_{12} + \sqrt{p_3} \cos \delta_{31} = 0$.  If we substitute $\cos \delta_{12}$ from \cref{eq:cs}, this becomes $\sqrt{p_2} \, (-1)^{s'} \sin \delta_{31} + \sqrt{p_3} \cos \delta_{31} = 0$ and we get
\begin{equation}
  \sin \delta_{31} =   (-1)^s      \sqrt{\frac{p_3}{p_2 + p_3}} \qquad\text{and}\qquad
  \cos \delta_{31} = - (-1)^{s+s'} \sqrt{\frac{p_2}{p_2 + p_3}}
  \label{d31}
\end{equation}
for some $s \in \set{0,1}$.  We can then find $\delta_{12}$ by substituting \cref{d31} back in \cref{eq:cs}:
\begin{equation}
  \sin \delta_{12} = - (-1)^s      \sqrt{\frac{p_2}{p_2 + p_3}} \qquad\text{and}\qquad
  \cos \delta_{12} =   (-1)^{s+s'} \sqrt{\frac{p_3}{p_2 + p_3}}.
  \label{d12}
\end{equation}
\Cref{d23,d31,d12} with $s,s' \in \set{0,1}$ give us the four solutions.  It remains to argue that these solutions produce the four states in \cref{eq:123}.

We begin by restating \cref{p:1,p:2,p:3} when $\cos \delta_{23} = 0$:
\begin{align}
  \rho
  &= p_1 \rho_1
   + p_2 \rho_2 \label{pp:1}
   + p_3 \rho_3 \\[0.2cm]
  &+ \sqrt{p_1 p_2} \, \sin \delta_{12} \, i [\rho_1, \rho_2] \nonumber \\
  &+ \sqrt{p_2 p_3} \, \sin \delta_{23} \, i [\rho_2, \rho_3] \label{pp:2} \\
  &+ \sqrt{p_3 p_1} \, \sin \delta_{31} \, i [\rho_3, \rho_1] \nonumber \\[0.2cm]
  &+ \sqrt{p_1 p_2} \, \cos \delta_{12} \, i \sof{\rho_1,i\sof{\rho_2,\rho_3}}. \label{pp:3}
\end{align}
Let us group the terms together to make this appear more similar to \cref{eq:l1,eq:l2}:
\begin{align}
  \rho
  &= p_1 \rho_1 +
     \of[\big]{
       p_2 \rho_2
     + p_3 \rho_3
     - \sqrt{p_2 p_3} \, \sin \delta_{23} \, i [\rho_3, \rho_2]
     } \\
  &+ i
     \sof[\big]{
     - \sqrt{p_1 p_2} \, \sin \delta_{12} \, \rho_2
     + \sqrt{p_3 p_1} \, \sin \delta_{31} \, \rho_3
     + \sqrt{p_1 p_2} \, \cos \delta_{12} \, i\sof{\rho_3,\rho_2}
     , \; \rho_1}.
\end{align}
If we pull out the desired prefactors, we get
\begin{align}
  \rho
   = p_1 \rho_1 &+
     \of{1 - p_1}
     \of[\bigg]{
       \frac{p_2}{p_2 + p_3} \, \rho_2
     + \frac{p_3}{p_2 + p_3} \, \rho_3
     - \frac{\sqrt{p_2 p_3}}{p_2 + p_3} \sin \delta_{23} \, i [\rho_3, \rho_2]
     } \\
  &+ \sqrt{p_1 (1 - p_1)} \, i
     \sof[\bigg]{
     - \frac{\sqrt{p_2} \, \sin \delta_{12}}{\sqrt{p_2 + p_3}} \, \rho_2
     + \frac{\sqrt{p_3} \, \sin \delta_{31}}{\sqrt{p_2 + p_3}} \, \rho_3
     + \frac{\sqrt{p_2} \, \cos \delta_{12}}{\sqrt{p_2 + p_3}} \, i\sof{\rho_3,\rho_2}
     , \; \rho_1}.
\end{align}
This becomes one of the four states in \cref{eq:123} once we substitute the values of all $\sin \delta_{ij}$ and $\cos \delta_{ij}$ from \cref{d23,d31,d12}.
\end{proof}

\subsection{The four-bar linkage and its number of orbits} \label{sect:Linkage}

Recall from \cref{q:1,q:2,q:3} in \cref{sect:C3} that the output state can be parametrized by $q_1, q_2, q_3 \in \C$, subject to \cref{eq:qs'}, i.e.,
\begin{align}
  \abs{q_1}^2 + \abs{q_2}^2 + \abs{q_3}^2 &= 1, &
  q_1 + q_2 + q_3 &= 1.
  \label{eq:qqq}
\end{align}
If we fix the absolute values $\abs{q_1}$, $\abs{q_2}$, $\abs{q_3}$ (and hence also the coefficients of the first-order terms in the output state), the set of possible solutions $(q_1, q_2, q_3)$ coincides with the configuration space of a \emph{four-bar linkage}, with bars of lengths $\abs{q_1}, \abs{q_2}, \abs{q_3}$, and a fourth (immobile) bar of length $1$ (see \cref{fig:Linkage}).  This simple mechanism played an important role in the invention of the steam engine and the bicycle in the 18th and 19th century, respectively, and it has found many other practical applications since.  Interestingly, it also occurs in nature, such as in the human knee joint and the jaw of a parrotfish.

\begin{figure}
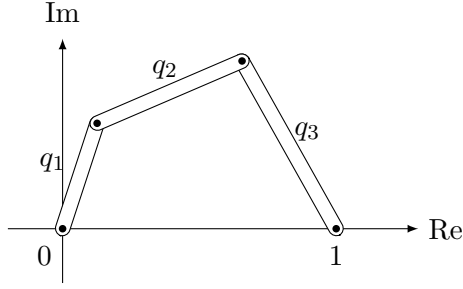

\centering
\linkage
\caption{\label{fig:Linkage}A four-bar linkage with bars $q_1, q_2, q_3 \in \C$ such that $q_1 + q_2 + q_3 = 1$.  The fourth bar (not shown) is immobile and corresponds to the interval $[0,1]$ on the real axis.}
\end{figure}

The four-bar linkage can have several different modes of operation, depending on the lengths of its bars (see \cite[p.~111]{Grashof} or \cite[Sections~2.3 and~2.4]{Linkages} for more details).  These modes are classified into two broad classes, based on the so-called \emph{Grashof condition}~\cite{Grashof}:
\begin{equation}
  a + d < b + c,
  \label{eq:Grashof}
\end{equation}
where $0 \leq a \leq b \leq c \leq d$ are the lengths of the four bars.  Under this condition, for example, the shortest bar $a$ can rotate fully.  More importantly, under this condition the mechanism has \emph{two} disjoint orbits, while it has a single orbit otherwise.\footnote{If \cref{eq:Grashof} holds with equality, the two disjoint orbits touch, thus merging into a single orbit.}  The two orbits are related to each other by a reflection around the real axis (or by complex conjugation of all $q_i$'s).  They also correspond to two different ways of assembling the mechanism.\footnote{By changing the angle between, say, $q_1$ and $q_2$ from reflex to non-reflex (and vice versa), one can obtain two configurations that belong to different orbits and are otherwise not reachable from one another.}

In the context of combining quantum states, the longest bar $d$ is always fixed since otherwise we would violate the first condition in \cref{eq:qqq}.  In other words, we have $d = 1$ and $a^2 + b^2 + c^2 = 1$.  Hence the number of orbits formed by possible output states, when the coefficients $\abs{q_1}$, $\abs{q_2}$, $\abs{q_3}$ in \cref{q:1} are fixed, can be determined as follows.

\begin{proposition}\label{prop:Orbits}
Let $a,b,c,d$ be as above, i.e., $0 \leq a \leq b \leq c \leq d = 1$ and $a^2 + b^2 + c^2 = 1$.  Then the corresponding four-bar linkage has one orbit, unless $b > b_0(c)$, in which case it has two; here
\begin{equation}
  b_0(c) := \frac{1}{2} \of*{1 - c + \sqrt{1 + (2 - 3c) c}}.
\end{equation}
In particular, if $c \leq 2/3$ then $b_0(c) \geq c \geq b$ and there is just one orbit irrespective of the value of $b$.
\end{proposition}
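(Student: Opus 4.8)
The plan is to reduce the statement to the Grashof condition~\eqref{eq:Grashof}, which (as recalled just above the proposition) governs the number of orbits: the linkage has two orbits precisely when $a + d < b + c$ holds strictly, and a single orbit otherwise. Since here $d = 1$, this reads $a < b + c - 1$, and I would eliminate $a$ using the normalization $a^2 + b^2 + c^2 = 1$ from~\eqref{eq:qqq}, i.e.\ $a = \sqrt{1 - b^2 - c^2}$.

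First I would note that since $a = \sqrt{1 - b^2 - c^2} \geq 0$, the inequality $a < b + c - 1$ can hold only if $b + c > 1$; in particular, whenever $b + c \leq 1$ there is a single orbit. In the remaining regime $b + c > 1$ both sides of $a < b + c - 1$ are nonnegative, so I may square, and substituting $a^2 = 1 - b^2 - c^2$ turns $a^2 < (b + c - 1)^2$ into the polynomial inequality $b^2 + bc + c^2 - b - c > 0$.

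Next I would read this as a quadratic in $b$, namely $b^2 + (c-1) b + (c^2 - c) > 0$. Its discriminant is exactly $1 + (2 - 3c)c$, the expression under the square root in $b_0(c)$, and its larger root is precisely $b_0(c)$, so the Grashof condition is equivalent to $b$ lying outside the two roots. The observation that makes the statement clean is that the smaller root $\tfrac12\of{1 - c - \sqrt{1 + (2-3c)c}}$ is nonpositive for every $c \in [0,1]$ (after squaring $1 - c \geq \sqrt{1 + (2-3c)c}$ this reduces to $4c(1-c) \geq 0$), so the branch $b < b_-$ is unreachable given $b \geq a \geq 0$. Hence two orbits occur exactly when $b > b_0(c)$. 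I would also record the mild consistency check that $b > b_0(c)$ already forces $b + c > 1$ (equivalently $b_0(c) + c \geq 1$), which again reduces to $4c(1-c) \geq 0$; this shows the two-case split dovetails correctly, since the case $b + c \leq 1$ places $b$ between the two roots and therefore satisfies $b \leq b_0(c)$.

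Finally, for the ``in particular'' clause I would prove $b_0(c) \geq c \iff c \leq 2/3$: the inequality $b_0(c) \geq c$ unwinds to $\sqrt{1 + (2-3c)c} \geq 3c - 1$, which is automatic for $c \leq 1/3$ and, upon squaring when $c > 1/3$, reduces to $c \leq 2/3$. Combined with the hypothesis $b \leq c$, this yields $b \leq c \leq b_0(c)$ and hence a single orbit whenever $c \leq 2/3$. I expect no genuine obstacle here, since the content is entirely elementary; the only care required is the bookkeeping of the sign conditions that justify each squaring step and the verification that the spurious smaller root never interferes.
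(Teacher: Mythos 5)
Your proposal is correct and takes essentially the same route as the paper: the paper likewise reduces to the Grashof condition with $d = 1$ and obtains $b_0(c)$ by solving the boundary system $a + 1 = b + c$, $a^2 + b^2 + c^2 = 1$, handling $c \leq 2/3$ by a direct inequality chain ($b + c \leq 2c \leq 4/3 \leq 1 + a$). Your write-up is simply a more fully executed version of the same argument, making explicit the sign bookkeeping the paper leaves implicit (discarding the nonpositive smaller root, and checking $b_0(c) + c \geq 1$ so the squaring step is consistent).
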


\begin{proof}
If $c \leq 2/3$ then $b + c \leq 2c \leq 4/3$ and $a^2 = 1-b^2-c^2 \geq 1-2c^2 \geq 1 - 8/9 = 1/9$, so $1 + a \geq 4/3 \geq b + c$ violates the Grashof condition~\eqref{eq:Grashof} and there is only one orbit irrespective of $b$.  If $c > 2/3$, we find the critical value $b_0(c)$ by solving $a+1=b+c$ and $a^2+b^2+c^2=1$.
\end{proof}

Intuitively, if $b$ and $c$ are both sufficiently large, there are two orbits.  In particular, in the extreme case when $a = 0$ we have $b = \sqrt{1 - c^2} > b_0(c)$ (assuming $c \neq 1$), meaning that there are two orbits for all choices of $b$ (except for $b = 0$, of course).  This is consistent with the fact that the $n=2$ case has two discrete solutions, see \cref{eq:+,eq:-}.

\subsection{The uniform combination} \label{sect:Uniform}

Let us consider the special case when $p_1 = p_2 = p_3 = 1/3$ (or when $\abs{q_1} = \abs{q_2} = \abs{q_3} = 1/\sqrt{3}$), i.e., the \emph{uniform combination}.  Using \cref{p:1,p:2,p:3}, the output state $\rho$ can be written as
\begin{falign}
  3 \rho
  &= \rho_1 + \rho_2 + \rho_3 \label{1/3:1} \\[0.2cm]
  &+ \sin \delta_{12} \, i [\rho_1, \rho_2] \nonumber \\
  &+ \sin \delta_{23} \, i [\rho_2, \rho_3] \label{1/3:2} \\
  &+ \sin \delta_{31} \, i [\rho_3, \rho_1] \nonumber \\[0.2cm]
  &+ \cos \delta_{12} \, \of{\rho_2 \rho_3 \rho_1 + \rho_1 \rho_3 \rho_2} \nonumber \\
  &+ \cos \delta_{23} \, \of{\rho_3 \rho_1 \rho_2 + \rho_2 \rho_1 \rho_3} \label{1/3:3} \\
  &+ \cos \delta_{31} \, \of{\rho_1 \rho_2 \rho_3 + \rho_3 \rho_2 \rho_1},\nonumber
\end{falign}
where the angles $\delta_{ij}$ are subject to the following relations, see \cref{eq:deltas}:
\begin{align}
  \delta_{12} + \delta_{23} + \delta_{31} &= 0, &
  \cos \delta_{12} + \cos \delta_{23} + \cos \delta_{31} &= 0.
  \label{eq:deltas'}
\end{align}

The corresponding four-bar linkage in this case has $a = b = c = 1/\sqrt{3}$, i.e., all non-stationary bars are of the same length.  \Cref{prop:Orbits} implies that such mechanism has a single orbit.  In other words, \cref{1/3:1,1/3:2,1/3:3} describe a single one-parameter orbit of states.  As proved in \cref{lem:Nested}, this orbit contains all twelve nested combinations (they are illustrated in \cref{fig:Clock}, together with the corresponding configurations of the four-bar linkage).

\begin{figure}
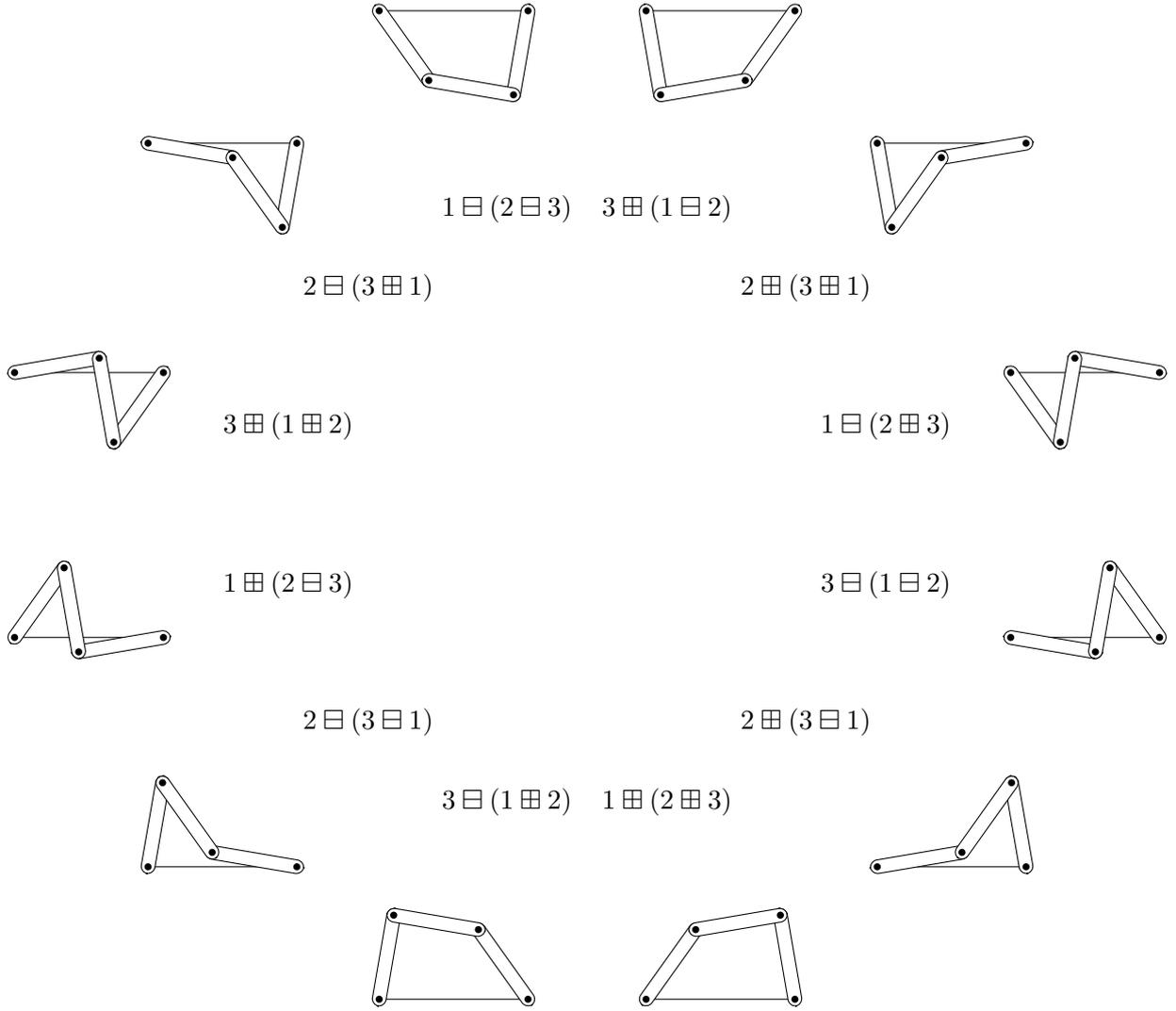

\centering
\clock
\caption{\label{fig:Clock}The orbit of a four-bar linkage with three bars of length $1/\sqrt{3}$ (and a fixed bar of length $1$) parametrizes all possible ways of uniformly combining three states.  Depicted are those linkage configurations where two of the moving bars are orthogonal---these configurations correspond to twelve \emph{nested} uniform combinations of three states (the inner and the outer $\bs$ operation in the nested combination have parameters $1/2$ and $1/3$, respectively).  The sign of the angle $\pm \pi/2$ between the two orthogonal bars determines the sign of the inner operation.  The sign of the outer operation is determined by the sign of the angle between the remaining bar and either of the other two bars.}
\end{figure}

Recall from \cref{lem:Nested} that nested combinations correspond to $\cos \delta_{ij} = 0$, for some $ij \in \set{12,23,31}$.  In terms of the complex parameters $q_i$ (see \cref{sect:C3}), this is equivalent to $\Re(q_i \bar{q}_j) = 0$, meaning that $q_i$ and $q_j$ are orthogonal as vectors in the complex plane.  Indeed, \cref{fig:Clock} illustrates exactly those configurations of the four-bar linkage where two of the three bars are orthogonal.  Moreover, the sign of the angle $\arg(q_i \bar{q}_j) = \pm \pi/2$ between $q_i$ and $q_j$ determines the sign of the inner $\bs$ operation in the nested combination.  The sign of the outer $\bs$ operation can be determined from the sign of the angle the remaining bar forms with either of the other two bars (both signs coincide).

As a side node, if the parameters $q_i$ are such that the four-bar linkage has two orbits (see \cref{prop:Orbits}), the output state cannot continuously pass through all twelve nested combinations, but only through those six for which either $\arg q_1 > 0$ or $\arg q_1 < 0$, depending on the initial configuration of the linkage.  To pass from one orbit to the other, one can simultaneously take the complex conjugate of all three parameters $q_i$.  In the extreme case when one of the $q_i$'s is zero, the two orbits degenerate to two points.  If two of the $q_i$'s are zero, the two points merge into one.

\begin{example}[Uniform combination of three mutually unbiased qubit states]
Let $I, \sigma_x, \sigma_y, \sigma_z$ be the Pauli matrices and $\rho(x,y,z) := \frac{1}{2} \of{I + x \sigma_x + y \sigma_y + z \sigma_z}$ be an arbitrary single-qubit state.  Let
$\rho_1 := \rho(1,0,0)$,
$\rho_2 := \rho(0,1,0)$,
$\rho_3 := \rho(0,0,1)$,
and $p_1 = p_2 = p_3 = 1/3$.
(Note that $\rho_1 = \proj{+}$, $\rho_2 = \proj{{+i}}$, $\rho_3 = \proj{0}$, i.e., these are mutually unbiased pure states pointing along the three axes of the Bloch sphere.)  According to \cref{1/3:1,1/3:2,1/3:3}, the combined state is given by
\begin{equation}
  \rho \of[\bigg]{
    \frac{1}{3} \of{1 - \sin \delta_{23}},
    \frac{1}{3} \of{1 - \sin \delta_{31}},
    \frac{1}{3} \of{1 - \sin \delta_{12}}
  },
\end{equation}
where the angles $\delta_{ij}$ are subject to relations \eqref{eq:deltas'}:
\begin{align}
  \delta_{12} + \delta_{23} + \delta_{31} &= 0, &
  \cos \delta_{12} + \cos \delta_{23} + \cos \delta_{31} &= 0.
\end{align}
The resulting one-parameter orbit is shown in \cref{fig:Orbit}.  According to \cref{lem:Nested}, this orbit contains all twelve nested uniform combinations of $\rho_1, \rho_2, \rho_3$, also illustrated in \cref{fig:Orbit}.
\end{example}

\begin{figure}
\centering
\begin{tikzpicture}[>=latex,
    line width = 1pt,
    arr/.style = {},
    pt/.style = {circle, fill = black, inner sep = 1.5pt}
  ]
\input{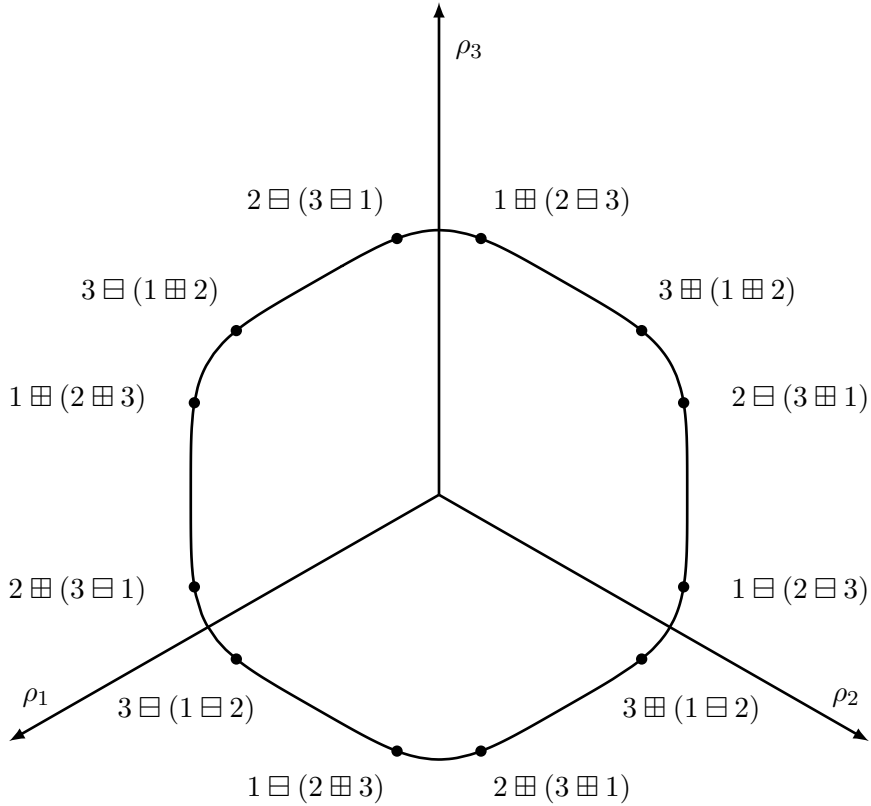}
\end{tikzpicture}
\caption{\label{fig:Orbit}The one-parameter orbit in the Bloch sphere corresponding to all uniform combinations of three mutually unbiased pure qubit states (they correspond to the three axes of the Bloch sphere).  This orbit contains all twelve nested combinations of the three states as special cases (the inner and the outer $\bs$ operations have parameters $1/2$ and $1/3$, respectively).  The orbit in this figure is rotated relative to the one in \cref{fig:Clock} by a certain angle.}
\end{figure}

\section{Open problems}

The main open problem is generalizing \cref{thm:EPI} (originally from~\cite{ADO16}) to three states.  Here is a formal statement of this conjecture.

\begin{conjecture*}[ADO inequality for three states]
For any concave and symmetric function $f: \D{d} \to \R$, any states $\rho_1, \rho_2, \rho_3 \in \D{d}$, and any probability distribution $(p_1, p_2, p_3)$,
\begin{equation}
  f(\rho)
  \geq
    p_1 f(\rho_1)
  + p_2 f(\rho_2)
  + p_3 f(\rho_3)
\end{equation}
where $\rho$ is given by \cref{p:1,p:2,p:3} with $p_i$ and $\delta_{ij}$ subject to \cref{eq:deltas}.
\end{conjecture*}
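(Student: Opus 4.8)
The plan is to follow the architecture of the two-state proof of \cref{thm:EPI}, treating first the parameter values at which the ternary operation degenerates to a nested binary one, and then attacking the general orbit. By \cref{lem:Nested}, whenever $\cos\delta_{ij}=0$ for some $ij\in\set{12,23,31}$, the output $\rho$ coincides with a nested combination $\rho_a\bs_{p_a}\of{\rho_b\bs_{p_b/(p_b+p_c)}\rho_c}$; applying \cref{thm:EPI} to the inner operation and then to the outer one yields $f(\rho)\ge p_1 f(\rho_1)+p_2 f(\rho_2)+p_3 f(\rho_3)$ at all twelve nested configurations. Since the right-hand side of the claimed inequality does \emph{not} depend on the orbit parameter, it would then suffice to show that, for each fixed distribution $(p_1,p_2,p_3)$, the minimum of $f(\rho)$ over the one-parameter orbit of \cref{eq:deltas} is attained at one of these nested points.

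To make $f$ tractable I would first reduce to a generating family. A symmetric concave $f:\D{d}\to\R$ is Schur-concave in the spectrum, so $f(\sigma)\ge f(\tau)$ whenever $\spec(\tau)\prec\spec(\sigma)$; a natural reduction is therefore to the elementary symmetric concave functions $\sigma\mapsto-\sum_{k=1}^{m}\lambda_k^{\downarrow}(\sigma)$, the negated Ky~Fan norms, which generate the majorization order, with a limiting argument extending the bound to all such $f$. For these the available operator-convexity machinery (Lieb concavity, the Ando--Lieb results, and the techniques of~\cite{CLL16}) is strongest. A clean sufficient condition would be to exhibit unital quantum channels $\mathcal{E}_1,\mathcal{E}_2,\mathcal{E}_3$ with
\begin{equation}
  \rho=p_1\,\mathcal{E}_1(\rho_1)+p_2\,\mathcal{E}_2(\rho_2)+p_3\,\mathcal{E}_3(\rho_3),
\end{equation}
since concavity of $f$ followed by $\mathcal{E}_k(\rho_k)\prec\rho_k$ (unital channels decrease in majorization order, so $f(\mathcal{E}_k(\rho_k))\ge f(\rho_k)$) would give the result at once.

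This naive decomposition cannot hold with state-independent channels, however, because the commutator and triple-product terms in \cref{p:2,p:3} are genuinely bilinear and trilinear in the inputs. I would therefore work instead from the dilation $\rho=\Tr_{2,3}\sof{U(\rho_1\otimes\rho_2\otimes\rho_3)U\ct}$ supplied by \cref{cor:Unitary}, and exploit covariance: the operation commutes with simultaneous $V^{\otimes3}$, so one may symmetrize (twirl) over the stabilizer of a configuration without changing $\rho$ and without decreasing $f$. The hope is that this twirl, combined with the constraint \cref{eq:qRes} coupling the two third-order coefficients, drives the spectrum of $\rho$ toward its most mixed value exactly at the nested points, reducing the general case to the settled one. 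Alternatively, mirroring the two-state mechanism in which $U_\lambda=e^{i\theta S}$ renders $\lambda\mapsto f(\rho_1\bp_\lambda\rho_2)$ concave, one can try to realize motion along the orbit as a Hamiltonian flow $e^{-iHt}$ generated by an element of the Lie algebra of $\bigoplus_\rep\U{d_\rep}$ and prove a corresponding concavity of $t\mapsto f(\rho(t))$ between consecutive nested points.

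The hard part will be controlling the irreducibly three-body terms identified in \cref{sect:Double commutators}. Unlike the single commutator of the $n=2$ case, the double commutators are sign-indefinite and have no two-state analogue, so neither the nested reduction nor the two-state flow argument extends verbatim; they can in principle spread the eigenvalues of $\rho$ and lower $f$. Establishing that the $\S_3$-covariant structure nonetheless prevents $f(\rho)$ from dropping below $\sum_i p_i f(\rho_i)$---most plausibly by showing that the orbit-minimum of $f$ sits at a nested configuration---is where a genuinely new ingredient beyond \cref{thm:EPI} will be needed.
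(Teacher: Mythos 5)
You should first note a structural point: the paper contains \emph{no} proof of this statement---it is posed as the main open problem---so your attempt cannot be compared against a paper argument and must stand on its own, and on its own it is not a proof. The portion you do establish is correct but is already in the paper: at the twelve configurations with $\cos \delta_{ij} = 0$, \cref{lem:Nested} identifies $\rho$ with a nested binary combination, and two applications of \cref{thm:EPI} give the claimed inequality there; the paper carries out exactly this two-step estimate in \cref{sect:Nested}. Your reduction of arbitrary symmetric concave $f$ to the negated Ky Fan functionals is also sound in principle: proving the majorization statement $\spec(\rho) \prec p_1 \spec^{\downarrow}(\rho_1) + p_2 \spec^{\downarrow}(\rho_2) + p_3 \spec^{\downarrow}(\rho_3)$ would imply the conjecture, by Schur concavity followed by ordinary concavity applied to commuting diagonal states (and this is the form in which \cite{ADO16} prove the $n=2$ case). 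But the bridge from the twelve nested points to the generic point of the one-parameter orbit of \cref{eq:deltas} is precisely where the conjecture lives, and there you offer only candidate strategies, as you yourself acknowledge in your final paragraph.

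Each of those strategies has a concrete hole. The claim that the orbit-minimum of $f(\rho)$ is attained at a nested configuration is unsupported; it would have to be verified simultaneously for every Ky Fan functional, and the sign-indefinite double-commutator terms of \cref{sect:Double commutators}, which you correctly identify as the obstruction, are not controlled by anything you propose. The twirling step is either vacuous or points the wrong way: if the twirl fixes $\rho$ it changes nothing, while post-processing the output by a nontrivial unital channel $\mathcal{E}$ gives $f(\mathcal{E}(\rho)) \geq f(\rho)$, which lower-bounds $f(\mathcal{E}(\rho))$ rather than $f(\rho)$---the conjecture needs a lower bound on $f(\rho)$ itself. The Hamiltonian-flow idea would indeed suffice if $t \mapsto f(\rho(t))$ were concave between consecutive nested points (a concave function on an interval attains its minimum at an endpoint), but no such concavity is proved, it is not how \cref{thm:EPI} is established in \cite{ADO16} or \cite{CLL16} for $n=2$, and nothing you write rules out a dip of $f(\rho(t))$ strictly between nested configurations. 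So after your proposal the status of the statement is exactly as in the paper: verified on the twelve nested configurations, open everywhere else on the orbit.
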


It would also be interesting to understand how an arbitrary number of states can be combined.  Towards this goal, the two main steps are:
\begin{enumerate}
  \item \emph{Finding a generalization of \cref{eq:Magic}.}  For general $n$, the expression of the output state $\rho$ has $(n!)^2$ terms, so we need a more efficient way of contracting tensor diagrams to compute it.
  \item \emph{Answering \Q2 for any $n$.}  While for $n=3$ it was sufficient to adjust the global phases of the unitaries $U_k$ in \cref{eq:Us}, for general $n$ it is not clear at all how to turn the first-order terms of $\rho$ into a convex combination of $\rho_i$, with coefficients depending only on the parameters $\z_{\pi}$ but not the input states $\rho_i$ themselves.
\end{enumerate}
It is also worthwhile investigating when higher-order terms of $\rho$ can be written as a linear combination of nested commutators.  Perhaps, as suggested by the $n=3$ case, better understanding of the second problem might make it possible to deal with both problems simultaneously, since many terms are likely to drop out simultaneously.

\section*{Acknowledgements}

I would like to thank Johannes Bausch, Andrew Childs, Toby Cubitt, Nilanjana Datta, Stephen Jordan, and Will Matthews for very helpful discussions.  This work was supported by the European Union under project QALGO (Grant Agreement No. 600700) and by a Leverhulme Trust Early Career Fellowship (ECF-2015-256).  In loving memory of L.O.V.\ Coffeebean.


\bibliographystyle{alphaurl}
\newcommand{\etalchar}[1]{$^{#1}$}

\appendix

\section{Deriving the parametrization from scratch} \label{apx:From scratch}

In this appendix we derive from scratch (namely, without using the irreps of $\S_3$) the parametrization of $\rho$ obtained in \cref{sect:C3}.  The only assumption that goes into our derivation is that $\z_1,\z_2,\z_3$ are real while $\z_4,\z_5,\z_6$ are imaginary---this is something we observed in \cref{sect:Nice}, \cref{z1,z2,z3,z4,z5,z6}.  This assumption is in fact sufficient for deriving \cref{q:1,q:2,q:3} and the constraints in \cref{eq:qs}.

Following \cref{eq:zqs}, we choose the coefficients $\z_i$ as
\begin{equation}
  (\z_1, \z_2, \z_3, \z_4, \z_5, \z_6) := (a_1, a_2, a_3, ib_1, ib_2, ib_3)
  \label{eq:ab}
\end{equation}
for some $a_1, a_2, a_3, b_1, b_2, b_3 \in \R$.  Without any additional constraints, this can potentially capture only more than \cref{z1,z2,z3,z4,z5,z6}.  At the same time, $\z_i$ chosen according to \cref{eq:ab} still satisfy the constraints imposed in \cref{prop:Res}.  In other words, this choice automatically takes care of \Q2.

Let us rewrite the output state $\rho$ from \cref{eq:Magic} in terms of the new parameters $a_i$ and $b_i$:
\begin{falign}
  \rho
  {}={}& \of[\big]{ a_1^2 + b_1^2 } \rho_1 \nonumber   \\
  {}+{}& \of[\big]{ a_2^2 + b_2^2 } \rho_2 \label{a:1} \\
  {}+{}& \of[\big]{ a_3^2 + b_3^2 } \rho_3 \nonumber   \\[0.2cm]
  {}+{}& \of[\big]{ a_1 b_2 - a_2 b_1 } i [\rho_2, \rho_1] \nonumber   \\
  {}+{}& \of[\big]{ a_2 b_3 - a_3 b_2 } i [\rho_3, \rho_2] \label{a:2} \\
  {}+{}& \of[\big]{ a_3 b_1 - a_1 b_3 } i [\rho_1, \rho_3] \nonumber   \\[0.2cm]
  {}+{}& \of[\big]{ a_1 a_2 + b_1 b_2 } \of{\rho_2 \rho_3 \rho_1 + \rho_1 \rho_3 \rho_2} \nonumber   \\
  {}+{}& \of[\big]{ a_2 a_3 + b_2 b_3 } \of{\rho_3 \rho_1 \rho_2 + \rho_2 \rho_1 \rho_3} \label{a:3} \\
  {}+{}& \of[\big]{ a_3 a_1 + b_3 b_1 } \of{\rho_1 \rho_2 \rho_3 + \rho_3 \rho_2 \rho_1}.\nonumber
\end{falign}
Clearly, this is identical to \cref{q:1,q:2,q:3} if we let $q_k := a_k + i b_k$.

Let us now work backwards to find what extra constraints should be imposed on the coefficients $a_i$ and $b_i$ to satisfy the unitarity requirement \Q1.  Recall from \cref{lem:Regular rep} that we want the following matrix, see \cref{eq:zL}, to be unitary:
\begin{equation}
  \sum_{k=1}^6 \z_k L_k =
  \mx{
    a_1 & a_3 & a_2 & ib_1 & ib_2 & ib_3 \\
    a_2 & a_1 & a_3 & ib_3 & ib_1 & ib_2 \\
    a_3 & a_2 & a_1 & ib_2 & ib_3 & ib_1 \\
    ib_1 & ib_3 & ib_2 & a_1 & a_2 & a_3 \\
    ib_2 & ib_1 & ib_3 & a_3 & a_1 & a_2 \\
    ib_3 & ib_2 & ib_1 & a_2 & a_3 & a_1
  } =
  \mx{A & iB\tp \\ iB & A\tp} =: U
\end{equation}
where
\begin{align}
  A &:=
  \mx{a_1 & a_3 & a_2 \\
      a_2 & a_1 & a_3 \\
      a_3 & a_2 & a_1 }, &
  B &:=
  \mx{b_1 & b_3 & b_2 \\
      b_2 & b_1 & b_3 \\
      b_3 & b_2 & b_1 }.
\end{align}
We can write the unitarity condition as
\begin{equation}
  U U\ct
  = \mx{A & iB\tp \\ iB & A\tp}
    \mx{A\tp & -iB\tp \\ -iB & A}
  = \mx{A A\tp + B\tp B & i[B\tp,A] \\
        i[B,A\tp] & A\tp A + B B\tp}
  = I.
\end{equation}
Note that $[B\tp,A] = 0$ holds automatically, so the remaining constraints follow solely from the diagonal blocks: $A A\tp + B\tp B = A\tp A + B B\tp = I$.  These constraints are:
\begin{align}
  a_1^2 + a_2^2 + a_3^2 + b_1^2 + b_2^2 + b_3^2 &= 1, \label{eq:one} \\
  a_1 a_2 + a_2 a_3 + a_3 a_1 + b_1 b_2 + b_2 b_3 + b_3 b_1 &= 0. \label{eq:zero}
\end{align}
It is not hard to see that these constraints are equivalent to \cref{eq:qs}.

\end{document}